\documentclass[a4paper,12pt]{amsart}
\usepackage[utf8]{inputenc}
\usepackage{fullpage}
\usepackage{hyperref}
\usepackage{enumerate}
\usepackage[usenames,dvipsnames]{xcolor}
\usepackage{amsmath,amsthm,amsfonts,amssymb}
\usepackage{mathtools,mathabx}
\usepackage{graphicx}

\newcommand{\bd}[1]{\mathrm{bd}\left(#1\right)}
\newcommand{\norm}[1]{\left\|#1\right\|}
\newcommand{\cvp}[1][]{\ensuremath{\text{CVP}_{#1}}}
\newcommand{\acvp}[2][(1+\epsilon)]{\ensuremath{#1}-\cvp[#2]}
\newcommand{\svp}[1][]{\ensuremath{\text{SVP}_{#1}}}
\newcommand{\twocov}[1][(2,\epsilon)]{\ensuremath{#1}-covering }
\newcommand{\twocovs}{$(2,\epsilon)$-coverings }
\renewcommand{\epsilon}{\varepsilon}
\newcommand{\lattenum}{{\tt Lattice-Enumerator}}
\newcommand{\lattspars}{{\tt Lattice-Sparsifier}}
\newcommand{\di}{\mathrm{d}}
\newcommand{\ceil}[1]{\left\lceil #1 \right\rceil}
\newcommand{\floor}[1]{\left\lfloor #1 \right\rfloor}

\makeatletter
\@namedef{subjclassname@2020}{%
  \textup{2020} Mathematics Subject Classification}
\makeatother

\def\R{\mathbb{R}}
\def\Q{\mathbb{Q}}

\def\Z{\mathbb{Z}}

\def\Ren{\mathbb{R}^n}
\def\st{\;:\;}

\DeclareMathOperator{\vol}{vol}
\DeclareMathOperator{\conv}{conv}

\DeclareMathOperator{\poly}{poly}

\author{M\'arton Nasz\'odi}
 \address{MN: Alfr\'ed R\'enyi Inst. of 
 Math.; MTA-ELTE Lend\"ulet Combinatorial Geometry Research Group;
 Dept. of Geometry, Lor\'and E\"otv\"os University, Budapest  
}
\email{marton.naszodi@math.elte.hu}

\author{Moritz Venzin}
 \address{MV: Institute for Mathematics, \'Ecole Polytechnique 
 Fédérale de Lausanne, Lausanne, Switzerland
}
\email{moritz.venzin@epfl.ch}

\keywords{Closest Vector Problem, Modulus of 
smoothness, Lattice sparsification, Convex body in d-dimensional space, 
Approximation}
\subjclass[2020]{90C10, 52C07, 68W25, 68Q25, 68U05}

\newtheorem{theorem}{Theorem}[section]
\newtheorem{lemma}[theorem]{Lemma}
\newtheorem{proposition}[theorem]{Proposition}

\newtheorem{corollary}[theorem]{Corollary}

\theoremstyle{definition}
\newtheorem{definition}[theorem]{Definition}

\newtheorem{remark}[theorem]{Remark}

\title{Covering convex bodies and the closest vector problem}

\begin{document}

\begin{abstract}
We present algorithms for the $(1+\epsilon)$-approximate version of the closest 
vector problem for certain norms.  The currently fastest algorithm (Dadush and 
Kun 2016) for general norms in dimension $n$ has running time of $2^{O(n)} 
(1/\epsilon)^n$. We 
improve this substantially in the following two cases.

First, for $\ell_p$-norms with $p>2$ (resp. $p \in [1,2]$) fixed,  we present 
an 
algorithm with a  running time of $2^{O(n)} (1+1/\epsilon)^{n/2}$ 
(resp. $2^{O(n)} (1+1/\epsilon)^{n/p}$). This result is based on a geometric 
covering problem, that was introduced in the context of CVP by Eisenbrand et 
al.: How many convex bodies are needed to cover the ball of the norm such that, 
if scaled by factor 2 around their centroids, each one is contained in the 
$(1+\epsilon)$-scaled homothet of the norm ball? We provide upper 
bounds for this \emph{$(2,\varepsilon)$-covering number} by exploiting the  
\emph{modulus of smoothness} of the 
$\ell_p$-balls. Applying a covering scheme, we can boost any 
$2$-approximation algorithm for CVP to a $(1+\epsilon)$-approximation algorithm 
with the improved run time, either using a straightforward sampling routine or 
using the deterministic algorithm of Dadush for the construction of an epsilon 
net.

Second, we consider polyhedral and zonotopal norms.  For centrally 
symmetric polytopes (resp. zonotopes) in ${\mathbb R}^n$ with $O(n)$ facets 
(resp. generated by 
$O(n)$ line segments), we provide a deterministic  
$O(\log_2(2+1/\epsilon))^{O(n)}$ time algorithm. This generalizes the result of 
Eisenbrand et al. which applies to the $\ell_\infty$-norm.

Finally, we establish a connection between the \emph{modulus of smoothness} and 
\emph{lattice sparsification}. As a consequence, using the enumeration and 
sparsification tools developped by Dadush, Kun, Peikert and Vempala, we present 
a simple alternative to the boosting procedure with the same time and space 
requirement for $\ell_p$ norms. This connection might be of independent 
interest.
\end{abstract}
\maketitle

\section{Introduction}
The \emph{closest vector problem} (CVP) is an important algorithmic problem in 
the 
geometry of numbers. Given a rational lattice $\Lambda(A) = \{Ax \st x \in 
\Z^n\}$, with $ A \in \Q^{n \times n}$ and a target vector $t \in \Q^n$, the 
task 
is to find a close vector in $\mathcal{L}$ to $t$ with respect to a given 
norm. Specifically, given some norm $\|\cdot\|_K$, a $(1+\epsilon)$-approximation to the closest vector problem, \acvp{K}, is to find a lattice vector whose distance to the target vector 
is at most $(1+\epsilon)$ times the minimal distance of the target to the 
lattice. Whenever $K$ is 
the unit ball of the space $\ell_p^n$ for some $1\leq p\leq \infty$, we denote the problem by \acvp{p}. The closely related \emph{shortest vector problem} (SVP) asks for the shortest non-zero 
lattice vector in a given lattice.
It was shown that CVP is NP-hard for 
any $\ell_p$ norm \cite{vanEmdeBoas81} and even NP-hard to approximate up to 
almost polynomial factors, \cite{Arora:1995:PCP:220989}, 
\cite{DBLP:journals/combinatorica/DinurKRS03}. 

The first algorithm to solve integer programming and, in particular, exact
\cvp[\infty] was given by Lenstra \cite{DBLP:journals/mor/Lenstra83} 
with a running time of $2^{O(n^2)}$. His algorithm connects the two fields of 
geometry of numbers and integer programming. Kannan 
\cite{DBLP:journals/mor/Kannan87} presented an algorithm for exact \cvp[] (and \svp[])
with a running time of $n^{O(n)}$ and polynomial space. Subsequent works 
improve on the constant in the exponent but improving the running time of 
$n^{O(n)}$ to single exponential in $n$ remained an open problem.  After 
Kannan's result, it took almost 15 years until Ajtai, Kumar and Sivakumar 
presented a randomized algorithm for $\text{SVP}_2$ with time and space 
$2^{O(n)}$ and 
\acvp{2} with time and space $2^{(1+1/\epsilon)n}$, 
\cite{DBLP:conf/stoc/AjtaiKS01}, 
\cite{DBLP:conf/coco/AjtaiKS02}. Subsequently, Blömer and Naewe 
\cite{DBLP:journals/tcs/BlomerN09} extended the randomized sieving algorithm of 
Ajtai et al. to solve \acvp{p} for all $p$ in time $O(1+1/\epsilon)^{2n}$ and 
space $O(1+1/\epsilon)^n$, see also \cite{svp_l_infty} and \cite{svp_mukh_l_p}. 
For $p = \infty$, Eisenbrand, Hähnle and Niemeier 
\cite{DBLP:conf/compgeom/EisenbrandHN11} then boosted the algorithm of Blömer 
and Naewe 
by showing that $2^{O(n)}\log(2+1/\epsilon)^n$ calls to a 
\acvp[2]{\infty} solver suffice to solve 
\acvp{\infty} implying a running time of 
$O(\log(2+1/\epsilon))^n$ and space requirement $2^{O(n)}$. Dadush \cite{DBLP:conf/latin/Dadush12} extended the Ajtai--Kumar--Sivakumar 
sieve to solve \acvp{} in any norm with a running time of $O(1+1/\epsilon)^{2n}$ 
and space $O(1+1/\epsilon)^n$. 
The first single exponential deterministic and exact solver for \cvp[2] was presented by Micciancio and Voulgaris 
\cite{DBLP:conf/stoc/MicciancioV10}. Their algorithm needs to store the up to 
$2(2^n-1)$ facets 
of the Voronoi cell of the lattice. Recently in 
\cite{DBLP:conf/ipco/HunkenschroderR19}, Hunkenschröder, Reuland and Schymura 
show that this can be avoided and do a first step towards a polynomial space 
algorithm for \cvp[2]. The currently fastest algorithms for exact \cvp[2] and 
$\text{SVP}_2$ use discrete
Gaussian sampling and need time and space $2^{n+o(n)}$, see \cite{gaussian_sampling}, 
\cite{DBLP:conf/soda/AggarwalS18}. Despite this progress for the $\ell_2$ norm, 
for general norms, only the randomized sieving approach seemed available to 
solve CVP. Using the elegant idea of lattice sparsification, Dadush and Kun 
\cite{DBLP:journals/toc/DadushK16} presented a deterministic algorithm 
solving \acvp{} for any norm in time 
$2^{O(n)}(1+1/\epsilon)^n$ and with space requirement $2^n \poly(n)$ - reducing 
the dependence on $(1/\epsilon)$ in the running time and removing the 
dependence on $(1/\epsilon)$ in the space requirement altogether compared with 
earlier randomized sieving approaches.

\subsection*{Our contribution}
In order to devise more efficient algorithms for \cvp[K] (and, in particular 
\cvp[p]), we study the problem of how many arbitrarily chosen convex bodies are needed to cover 
some given convex body $K$, such that when scaled around their respective centroids 
by a factor $2$, each one is contained in $(1+\epsilon)K$. We refer to such a 
covering as a \emph{\twocov for $K$}, and the smallest size of such a covering 
as the \emph{\twocov number of $K$}.

A key quantity, well studied in the theory of Banach spaces, is the 
\emph{modulus of smoothness} of a convex body $K$, which expresses how well the 
boundary of $K$ is approximated locally by support hyperplanes, see 
Definition~\ref{def:modsmoothness}.

In this paper the \emph{big oh notation}, $O(.)$, stands for a universal 
multiplicative constant independent of every other quantity. In particular, we 
make the dependence 
on $\varepsilon$ and $n$ explicit.

\begin{enumerate}
	\item  By a standard argument, we show that for any centrally symmetric 
convex body, a \twocov is always possible using 
$2^{O(n)}(1+\frac{1}{\epsilon})^n$ convex bodies. Then, in 
Theorem~\ref{thm:ballneedsmany}, we establish a \textbf{lower bound} of 
$2^{-O(n)}(1+\frac{1}{\epsilon})^{n/2}$ for the Euclidean unit ball. 
	\item For centrally symmetric polytopes (resp. zonotopes) with \textbf{$m$ 
facets} 
(resp. $m$ generating line segments), we provide an explicit \twocov using at 
most $O(\log(2+\frac{1}{\epsilon}))^m$ convex bodies, see 
Propositions~\ref{covering:zonotope} and \ref{covering:polytope}.
These are relatively straightforward generalizations of the method of 
\cite{DBLP:conf/compgeom/EisenbrandHN11} where the cube is considered.
	\item Our first main result is Theorem~\ref{cov:smooth}, where it is 
shown 
that a bound on the \textbf{modulus of smoothness} of $K$ yields a 
\textbf{bound on its
\twocov number}. More specifically, if 
$K$ has modulus of smoothness bounded above by $C\tau^q$, then we 
find a \twocov of $K$ using $C^{O(n)}(1+\frac{1}{\epsilon})^{n/q}$ convex 
bodies. In particular, we obtain a \twocov for $\ell_p$ balls 
using $2^{O(n)}(1+\frac{1}{\epsilon})^{n/2}$ for $p \geq 2$ and 
$2^{O(n)}(1+\frac{1}{\epsilon})^{n/p}$ for $p \in [1, 2]$, matching the lower 
bound (Theorem~\ref{thm:ballneedsmany}) for the Euclidean unit ball.
	\item 
Our second main result is Theorem~\ref{2-epsilon:det-solver}, which shows how a 
good algorithmic bound on the \twocov number yields an \textbf{efficient 
\acvp{} algorithm}.
In particular, for norms induced by centrally symmetric polytopes (resp. 
zonotopes) 
with $m$ facets (resp. generating line segments), the above explicit 
\twocov boosts any \acvp[2]{} solver for general norms 
to yield a deterministic \acvp{} algorithm. This yields 
an algorithm with running time $O(\log(2+\frac{1}{\epsilon}))^m$ and $2^n 
\poly(n)$ space, see Corollary~\ref{cor:algforpolytopeszonotopes}.

\item For a centrally symmetric convex body $K$ with a certain modulus 
of smoothness, to avoid the space requirement to depend on the number of convex 
bodies in the \twocov of $K$, we show how to \textbf{generate a local 
\twocov on the fly}. This yields a simple, randomized 
\acvp{p} algorithm for $1\leq p\leq\infty$ with a 
running time of 
$O(1+\frac{1}{\epsilon})^{n/2}$ for $p \geq 2$, and 
$2^{O(n)}(1+\frac{1}{\epsilon})^{n/p}$ for $p \in [1,2]$, using $2^{n} \poly(n)$ 
space. Alternatively, we may use an algorithm of Dadush \cite{Da2013} to 
explicitly enumerate the covering using polynomial space only, derandomizing 
the 
algorithm. This is our third main result, see 
Theorem~\ref{thm:algmodsmoothness}. 

Compared to earlier results 
in the literature, for instance \cite{DBLP:journals/tcs/BlomerN09}, 
\cite{DBLP:journals/toc/DadushK16}, we improve on the previous best running 
times of $O(1+\frac{1}{\epsilon})^n$ for $\ell_p$ norms. 

Furthermore, our approach immediately generalizes to non-symmetric norms and we 
obtain a simple CVP solver for $\gamma$-symmetric norms with running time 
$(1+\frac{1}{\gamma  \epsilon})^n$ and space requirement $2^{O(n)}$ 
based on the Ajtai--Kumar--Sivakumar sieve, see Remark~\ref{rem:nonsymm}. This 
almost matches the performance of Dadush and Kun's algorithm.

\item Finally, we establish a connection between \textbf{lattice 
sparsification} and the 
\textbf{modulus of smoothness}, see Lemma~\ref{lemma:observation}. While the 
boosting 
approach described in Sections~\ref{sec:modsmoothness} and 
\ref{sec:cvpalg} is conceptually very simple and general, and it does not 
require any knowledge about the approximate \cvp\, solver used, the proofs are 
quite technical. 
We will show that we can tweak the algorithm described by 
Dadush and Kun in \cite{DBLP:journals/toc/DadushK16} using a simple observation 
based on the modulus of smoothness in order to obtain the same improved running 
time for \cvp\, for norms with a certain modulus of smoothness, in particular 
\cvp[p]. With this new approach, we restrict ourselves to using lattice 
sparsification and enumeration and we lose the possibility to use an arbitrary 
constant 
approximation \cvp-solver. Considering the low space dependency of lattice 
sparsification and 
enumeration among all known (single 
exponential) approximate \cvp\, solvers and the simplicity of our approach, 
this might not be a big loss.
\end{enumerate}

It should be noted here that a seemingly similar (with respect to 
$\varepsilon$) bound on the \twocov number follows from recent work of Arya, 
Fonseca and Mount \cite{AFM17} (see also \cite{AM18}). Using Macbeath regions, 
they approximate \emph{any} convex body with a polytope with at most 
$n^{O(n)}\varepsilon^{-(n-1)/2}$ faces of all dimensions in total, 
provided that $\varepsilon \ll n^{-n}$. It is then
straightforward to show that this can be turned into a \twocov using roughly
$n^{O(n)}\varepsilon^{-(n-1)/2}$ convex bodies. 
Unfortunately, for the purpose of designing 
approximation algorithms for lattice problems, this is of little use, as already 
the $n^{O(n)}$ factor is prohibitively high considering that the exact solver 
of Kannan runs in $n^{O(n)}$ time. Moreover, any approximation based on 
Macbeath regions requires $\varepsilon \ll n^{-n}$, which is too strong a 
restriction for integer programming related applications. Nonetheless, their 
result shows that for $\varepsilon$ sufficiently small, any convex body admits a 
\twocov using $O(1+1/\varepsilon)^{n/2}$ convex bodies and raises the question 
whether the restriction on $\varepsilon$ can be removed in general.  As 
mentioned above, in the present work, the dimension $n$ 
is not considered constant, and dependence on it is made explicit everywhere.

The structure of the paper is the following. In Section~\ref{sec:twocov}, we 
list basic facts about \twocovs and prove upper bounds on the \twocov number 
of symmetric polytopes and zonotopes (Propositions~\ref{covering:zonotope}
and ~\ref{covering:polytope}). In Theorem~\ref{thm:ballneedsmany}, a lower 
bound 
on the covering number of the Euclidean ball is presented. In 
Section~\ref{sec:modsmoothness}, it is shown how a bound on the modulus of 
smoothness yields a bound on the \twocov number. In 
Section~\ref{sec:cvpalg}, we apply our covering bounds to obtain efficient 
algorithms for \acvp{}. Finally, Section~\ref{sec:sparsifier} contains 
Theorem~\ref{thm:sparsificationmodulus}, which presents another \acvp{} solver 
for bodies with a well bounded modulus of convexity, based on efficient lattice 
sparsification and lattice enumeration algorithms.

The scalar product of two vectors $x=(x_1,\ldots,x_n)$ and $y=(y_1,\ldots,y_n)$ 
in $\Ren$ is denoted by $\langle x,y\rangle=x_1y_1+\ldots+x_ny_n$. For a 
positive integer $k$, we use the notation $[k]=\{1,\ldots,k\}$.

\section{\texorpdfstring{$(2,\epsilon)$}{(2,epsilon)}-coverings}
\label{sec:twocov}
We denote the \emph{homothetic copy} of a convex body $Q$ by factor 
$\lambda\in\R$ with respect to its \emph{centroid} (also called, center of 
mass) $c(Q)$ by $\lambda\odot Q=\lambda(Q-c(Q))+c(Q)$.

The following notion is central to our study.

\begin{definition}[$(2,\epsilon)$-covering]\label{defn:2epiloncovering}
For a convex body $K \subseteq \R^n$, a sequence of convex bodies 
$\{Q_i\}_{i=1}^N$ is a \twocov if 
\[K \subseteq \bigcup_{i=1}^N Q_i \subseteq \bigcup_{i=1}^N 2\odot Q_i 
\subseteq (1+\epsilon)K.\]
\end{definition}
We note that we have fixed the factor $2$ for concreteness, we could replace $2$ by any other constant. For this reason we will assume  $\epsilon \in (0,1)$.

The following three lemmas follow directly from standard packing arguments, we 
include a proof in the Appendix \ref{sec:appendix}.
 
\begin{lemma}\label{lemma:symmetric}
	Any origin symmetric convex body $K \subseteq \R^n$ admits a \twocov by at 
most $(\frac{5}{\epsilon})^n$ homothetic copies of $K$.
\end{lemma}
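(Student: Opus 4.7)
The plan is to exhibit an explicit $(2,\epsilon)$-covering of $K$ by translates of $\frac{\epsilon}{2}K$ whose centers lie in $K$, and then bound the number of centers by a standard volumetric packing argument.

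First I would fix the shape of the covering bodies: for each center $c_i \in K$, set $Q_i = c_i + \frac{\epsilon}{2}K$. Since $K$ is origin-symmetric, each $Q_i$ is centrally symmetric about $c_i$, so the centroid of $Q_i$ is exactly $c_i$ and hence $2 \odot Q_i = c_i + \epsilon K$. Using origin-symmetry of $K$ and $c_i \in K$, we have $c_i + \epsilon K \subseteq K + \epsilon K = (1+\epsilon)K$, which gives the outer containment required by Definition~\ref{defn:2epiloncovering} for free.

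Next I would produce the centers. Choose $\{c_i\}_{i=1}^N \subseteq K$ to be maximal with the property that $\|c_i - c_j\|_K > \epsilon/2$ for all $i \neq j$. Maximality of this $(\epsilon/2)$-separated set immediately gives that for every $x \in K$ there is some $c_i$ with $x \in c_i + \frac{\epsilon}{2}K$, i.e., $K \subseteq \bigcup_i Q_i$, which is the inner containment. I expect the only nontrivial step is bounding $N$: here I would use the classical packing trick, noting that the translates $c_i + \frac{\epsilon}{4}K$ are pairwise interior-disjoint (by the $\epsilon/2$-separation and the triangle inequality for $\|\cdot\|_K$) and all contained in $K + \frac{\epsilon}{4}K = (1+\frac{\epsilon}{4})K$. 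Comparing volumes yields
\[
N \cdot \left(\tfrac{\epsilon}{4}\right)^n \vol(K) \;\leq\; \left(1+\tfrac{\epsilon}{4}\right)^n \vol(K),
\]
so $N \leq (1 + 4/\epsilon)^n \leq (5/\epsilon)^n$ for $\epsilon \in (0,1)$, as desired.

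The main (minor) obstacle is really just making sure every symmetry and scaling is used in the right place: that the centroid of $Q_i$ is $c_i$ (uses symmetry of $K$), that $c_i + \epsilon K \subseteq (1+\epsilon)K$ (uses both $c_i \in K$ and symmetry, so that $K + \epsilon K = (1+\epsilon)K$), and that the packing argument is carried out with the \emph{half-radius} copies $\frac{\epsilon}{4}K$ rather than $\frac{\epsilon}{2}K$ so that interior-disjointness follows from the separation condition. Once these are lined up, the lemma follows.
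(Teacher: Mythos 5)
Your proof is correct and follows essentially the same route as the paper's: a greedy/maximal $(\epsilon/2)$-separated set of centers in $K$, translates of $\frac{\epsilon}{2}K$ as the covering bodies, and a volume comparison of the disjoint half-radius copies inside $(1+\frac{\epsilon}{4})K$ to bound $N$ by $(5/\epsilon)^n$. The only difference is cosmetic — you spell out the containment $2\odot Q_i \subseteq (1+\epsilon)K$, which the paper leaves implicit.
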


We also note that it is sufficient to consider coverings by centrally 
symmetric convex bodies only.
\begin{lemma}\label{lem:symmetriccovering}
Let $K$ be a convex body in $\Ren$ that admits a \twocov 
consisting of $N$ convex bodies.
Then, $K$ admits a \twocov consisting of $10^n N$ centrally 
symmetric convex bodies.
\end{lemma}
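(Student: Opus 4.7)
The plan is to build the new, symmetric covering one body at a time: I will show that every convex body $Q$ admits a covering by at most $5^n$ centrally symmetric convex bodies $\{P_j\}$ with $\bigcup_j 2\odot P_j \subseteq 2\odot Q$, and then apply this to each $Q_i$ in the given covering. Since $\bigcup_i 2 \odot Q_i \subseteq (1+\epsilon)K$, concatenating the symmetric replacements automatically produces a $(2,\epsilon)$-covering of $K$ of size at most $5^n N$.

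For the single-body reduction, write $c := c(Q)$ and $\tilde Q := Q - c$, so that $\tilde Q$ has centroid at the origin. Choose a centrally symmetric convex body $S \subseteq \tilde Q$ with $\vol(S)$ of the same order as $\vol(Q)$; the canonical choice is $S := \tilde Q \cap (-\tilde Q)$, whose volume is lower-bounded by an absolute constant to the $n$-th power times $\vol(Q)$ by classical inequalities (Stein, or Minkowski--Radon, which yields $n^{-1}\tilde Q \subseteq S$ since the centroid of $\tilde Q$ is at the origin). I then cover $Q$ greedily by $P_j := p_j + \tfrac{1}{2}S$ in the style of Lemma~\ref{lemma:symmetric}: at each step select an uncovered $p_j \in Q$ and add $P_j$ (centrally symmetric about $p_j$) to the collection.

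Two things need to be checked. First, the doubling: $2\odot P_j = p_j + S$, and since $p_j \in Q$ implies $p_j - c \in \tilde Q$ while $S \subseteq \tilde Q$, convexity of $\tilde Q$ (which contains the origin) gives $(p_j - c) + S \subseteq \tilde Q + \tilde Q = 2\tilde Q$, hence $2\odot P_j \subseteq c + 2\tilde Q = 2\odot Q$. Second, the count: the quarter-sized sets $p_j + \tfrac{1}{4}S$ are pairwise disjoint by the greedy rule (using the symmetry identity $S - S = 2S$) and all contained in $Q + \tfrac{1}{4}\tilde Q \subseteq c + \tfrac{5}{4}\tilde Q$, a set of volume $(5/4)^n \vol(Q)$; dividing by the volume of each $\tfrac{1}{4}S$ caps the number of translates at $5^n \cdot \vol(Q)/\vol(S)$.

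The main obstacle is the volume comparison $\vol(Q)/\vol(S)$: for centrally symmetric $Q$ the ratio is $1$ and the count is exactly $5^n$, whereas in general one has to absorb an $n$-dependent constant coming from Stein's inequality. The only non-routine ingredient is therefore this volume estimate on the symmetric intersection $\tilde Q \cap (-\tilde Q)$; everything else is the standard greedy/volume packing argument already used in Lemma~\ref{lemma:symmetric}.
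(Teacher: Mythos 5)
Your argument is correct in substance and follows essentially the same route as the paper: symmetrize each $Q_i$ about its centroid, cover it greedily by translates of a half-scale centrally symmetric body, bound the count by a packing/volume argument as in Lemma~\ref{lemma:symmetric}, and check that doubling each translate stays inside $2\odot Q_i$. The one real difference is the choice of symmetrization, and it costs you the stated constant. You take $S=\tilde Q\cap(-\tilde Q)$, which has the virtue of sitting inside $\tilde Q$ (so $2\odot P_j=p_j+S\subseteq 2\odot Q$ is immediate), but whose volume can be as small as $2^{-n}\vol(Q)$ (Milman--Pajor is tight, e.g.\ for the simplex); your count is therefore $5^n\vol(Q)/\vol(S)\leq 10^n$ per body, not $5^n$. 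The paper instead uses the half convex-hull symmetrization $\frac12\conv(Q_i-c,\,c-Q_i)$, for which $\vol\left(\conv(Q_i-c,c-Q_i)\right)\geq\vol(Q_i)$ holds trivially (it contains $Q_i-c$), so the packing count comes out as $(5/4)^n\cdot 2^n=5^n$ exactly; the trade-off is that this body is not contained in $Q_i-c$, so the doubling containment is less immediate there than in your version. Since every use of the lemma only needs $2^{O(n)}N$, your proof is perfectly serviceable provided you state the conclusion as $10^nN$ (or $2^{O(n)}N$). One small correction: the parenthetical appeal to Minkowski--Radon, $n^{-1}\tilde Q\subseteq S$, only yields $\vol(S)\geq n^{-n}\vol(Q)$, which is \emph{not} an absolute constant to the $n$-th power; for the claimed bound you genuinely need the Milman--Pajor inequality \eqref{eq:MiPa}, which is the same ingredient the paper invokes.
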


\begin{lemma}\label{lemma:symmetrizer}
Any convex body $K \subseteq \R^n$ with $0$ as its centroid has a \twocov by at 
most $N = 
(\frac{10}{\epsilon})^n$ translated copies of $\frac{\epsilon}{2} (K\cap -K)$.
\end{lemma}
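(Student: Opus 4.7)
The plan is to essentially repeat the greedy packing argument from Lemma~\ref{lemma:symmetric}, but using translates of the symmetrization $S := \frac{\epsilon}{2}(K \cap -K)$ instead of homothets of $K$ itself. Since $S$ is centrally symmetric about the origin, each translate $Q_i = p_i + S$ has centroid $p_i$, so $2 \odot Q_i = p_i + \epsilon (K \cap -K)$, and we can use the Milman--Pajor inequality~\eqref{eq:MiPa} at the volume comparison step.

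First I would run the greedy procedure: maintain a list of selected points $p_1, \ldots, p_i \in K$ and, as long as some point of $K$ is not yet covered by $\bigcup_j (p_j + S)$, pick such an uncovered $p_{i+1}$. This terminates once every point of $K$ is covered, so automatically $K \subseteq \bigcup_i Q_i$. Next I would check the outer inclusion of Definition~\ref{defn:2epiloncovering}: since $K \cap -K \subseteq K$ and $K$ is convex, we have
\[
 2 \odot Q_i \;=\; p_i + \epsilon (K \cap -K) \;\subseteq\; p_i + \epsilon K \;\subseteq\; K + \epsilon K \;=\; (1+\epsilon) K,
\]
where the last equality uses convexity of $K$. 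So the family $\{Q_i\}$ is indeed a \twocov by translates of $S$.

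To bound the number $N$ of greedy steps, I would look at the half-size translates $p_i + \tfrac{1}{2} S = p_i + \tfrac{\epsilon}{4}(K \cap -K)$. By the stopping rule, $p_i \notin p_j + S$ for $j<i$, hence $p_i - p_j \notin \frac{\epsilon}{2}(K \cap -K)$; combined with central symmetry of $K \cap -K$ and convexity, this yields that the half-size translates are pairwise disjoint. On the other hand each one sits inside $K + \tfrac{\epsilon}{4} K = (1+\tfrac{\epsilon}{4}) K \subseteq \tfrac{5}{4} K$ (using $\epsilon < 1$). Comparing volumes,
\[
 N \cdot \left(\tfrac{\epsilon}{4}\right)^n \vol(K \cap -K) \;\leq\; \left(\tfrac{5}{4}\right)^n \vol(K),
\]
and applying~\eqref{eq:MiPa} to the body $K$ (whose centroid is $0$) gives $\vol(K \cap -K) \geq 2^{-n} \vol(K)$, whence $N \leq (10/\epsilon)^n$.

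The only subtle step is the conjunction in the last display: one must remember to (i) invoke the Minkowski sum identity $\alpha K + \beta K = (\alpha+\beta) K$ for convex $K$ in order to get $2 \odot Q_i \subseteq (1+\epsilon) K$ and to contain the packing in $\tfrac{5}{4} K$, and (ii) apply Milman--Pajor to $K$ (whose centroid is at the origin by hypothesis) rather than to one of the $Q_i$. Everything else is a straightforward rerun of the proof of Lemma~\ref{lemma:symmetric}, with the extra factor of $2^n$ coming from~\eqref{eq:MiPa} accounting for the jump from the bound $(5/\epsilon)^n$ to $(10/\epsilon)^n$.
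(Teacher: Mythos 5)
Your proof is correct and is exactly the argument the paper intends: the paper gives no written proof for this lemma, saying only that ``the same argument as in Lemma~\ref{lemma:symmetric} combined with \eqref{eq:MiPa}'' yields it, and your write-up is precisely that greedy packing argument with the Milman--Pajor step and the constant $(10/\epsilon)^n$ worked out correctly.
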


In the particular case of the cube, in 
\cite{DBLP:conf/compgeom/EisenbrandHN11}, Eisenbrand et al. 
found a \twocov that requires $(1+2\log_2(1+1/\epsilon))^n$  
parallelepipeds. The following two propositions show that their method 
generally works for any zonotope or any centrally symmetric polytope.

A \emph{zonotope} is the Minkowski sum of finitely many line segments, 
$\mathcal{Z} = \{ \sum_{i = 1}^m \lambda_i b_i \st \lambda_i \in [-1,1], \, 
\, 1 \leq i \leq m\} = \sum_{i=1}^m [-b_i, b_i]$. We 
refer to the $b_i$ as the \emph{generators} of $\mathcal{Z}$. If $m=n$ and $b_i 
= e_i\; (i=1,\ldots,n)$, then 
this zonotope is the unit cube. A zonotope with $m$ generators can have up to 
$2\binom{m}{n-1}$ facets; when no $n$ of the generators are linearly dependent, 
this bound is attained, as is not difficult to see. 

In the following two Propositions, we give upper bounds for the \twocov of 
zonotopes with a bounded number of generators and for polytopes with a bounded 
number of facets.
We include these proof in the Appendix \ref{sec:appendix}.

\begin{proposition}[\twocov of a zonotope by smaller 
zonotopes]\label{covering:zonotope}
Let $\mathcal{Z} = \{\sum_{i=1}^m \lambda_i b_i \st \lambda_i \in [-1,1] 
\text{ and } i \in [m]\}$ be a zonotope with $m$ generators, $b_1, \ldots, b_m \in \R^n$. For any $\epsilon 
>0$, there exists a \twocov of $\mathcal{Z}$ using 
$(1+2\log_2(1+1/\epsilon))^m$ zonotopes.
\end{proposition}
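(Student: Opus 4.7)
The plan is to reduce the problem to the cube case handled by Eisenbrand, Hähnle and Niemeier in \cite{DBLP:conf/compgeom/EisenbrandHN11}, via the linear map $L\colon \R^m\to \Ren$ given by $L(\lambda)=\sum_{i=1}^m \lambda_i b_i$, so that $L([-1,1]^m)=\mathcal{Z}$. Their one-dimensional construction---a central interval $[-\epsilon,\epsilon]$ together with $\log_2(1/\epsilon)$ intervals of geometrically doubling length on each side of $[-1,1]$, anchored so that doubling each around its midpoint still lies inside $[-(1+\epsilon),1+\epsilon]$---yields, by taking the $m$-fold Cartesian product, a \twocov $\{P_j\}_{j=1}^N$ of $[-1,1]^m$ by $N=(1+2\log_2(1/\epsilon))^m$ axis-aligned parallelepipeds.

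I then set $Q_j:=L(P_j)$. Since each $P_j$ is a Minkowski sum of segments along the coordinate axes of $\R^m$, its image $Q_j$ is a Minkowski sum of segments in $\Ren$, that is, a (translated) zonotope. The containment $\mathcal{Z}\subseteq \bigcup_j Q_j$ is immediate by applying $L$ to $[-1,1]^m\subseteq \bigcup_j P_j$. The crucial step is the identity $c(Q_j)=L(c(P_j))$, which is \emph{not} automatic for arbitrary convex bodies under non-injective linear maps, but holds here: each $P_j$ is centrally symmetric about its centroid, so $L(P_j)$ is centrally symmetric about $L(c(P_j))$, and the center of symmetry of a centrally symmetric convex body agrees with its centroid. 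With this identity in hand, $2\odot Q_j=L(2\odot P_j)$, and therefore
\[
\bigcup_{j=1}^N 2\odot Q_j \;=\; L\!\left(\bigcup_{j=1}^N 2\odot P_j\right) \;\subseteq\; L\bigl((1+\epsilon)[-1,1]^m\bigr) \;=\; (1+\epsilon)\mathcal{Z}.
\]

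The main obstacle in this approach is precisely the centroid identity, but it is handled entirely by the central symmetry of the parallelepipeds $P_j$; the rest of the argument is routine tracking of how the \twocov conditions transfer through $L$. If one prefers not to cite the cube construction, the one-dimensional covering above can be verified by a direct calculation showing that the $k$-th doubling interval, centered appropriately, extends under scaling by $2$ by exactly its half-length and thus fits inside $[-(1+\epsilon),1+\epsilon]$ provided the geometric ratio and anchor are chosen so that the outermost interval is flush with $\pm 1$.
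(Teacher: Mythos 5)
Your proof is correct and is essentially the paper's own construction in different clothing: the paper decomposes each generating segment $[-b_i,b_i]$ into the same family of $2k$ translated subintervals of geometrically growing length and takes all Minkowski-sum combinations, which is exactly the image under your map $L$ of the product covering of $[-1,1]^m$. Your explicit verification that centroids push forward correctly under the non-injective map $L$ (via central symmetry of the boxes) is a point the paper leaves implicit, but the resulting covering is identical.
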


\begin{proposition}[\twocov centrally symmetric polytopes with 
few facets]
\label{covering:polytope}
Let $P = \{x\in \R^n \st |a_i^T x|\leq b_i \, , i \in [m]\}$ be a 
origin symmetric polytope.
There is a \twocov of $P$ using at most $2^m 
(\log_{4/3}(1/\epsilon)+1)^m$ centrally symmetric convex bodies.  
\end{proposition}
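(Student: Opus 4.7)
The plan is to adapt the construction of Proposition~\ref{covering:zonotope}, replacing Minkowski sums of sub-segments by intersections of slabs and inserting an extra symmetrization step to handle the loss of central symmetry. For each constraint $i\in[m]$, I would partition $[-b_i,b_i]$ into $2(\log_{4/3}(1/\epsilon)+1)$ intervals $I_{i,j}=[c_{i,j}-r_{i,j},c_{i,j}+r_{i,j}]$, geometrically graded with ratio $4/3$ towards $\pm b_i$ and with a larger central interval around $0$, chosen so that each one satisfies the strengthened condition $|c_{i,j}|+7r_{i,j}\le (1+\epsilon)b_i$, equivalently $7\odot I_{i,j}\subseteq[-(1+\epsilon)b_i,(1+\epsilon)b_i]$. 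The recurrence $d_{k+1}\le \tfrac{4}{3}d_k + \tfrac{1}{3}\epsilon b_i$ on the distance $d_k$ from the inner endpoint of the $k$th interval to $\pm b_i$ shows that $\log_{4/3}(1/\epsilon)+O(1)$ intervals per side suffice.

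For each tuple $J=(j_1,\ldots,j_m)$, I would form the slab intersection
\[
Q_J \;:=\; \bigcap_{i=1}^m \{x\in\R^n : a_i^T x \in I_{i,j_i}\}.
\]
The $Q_J$ are convex and collectively cover $P$, and there are at most $\bigl(2(\log_{4/3}(1/\epsilon)+1)\bigr)^m = 2^m(\log_{4/3}(1/\epsilon)+1)^m$ such tuples.

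The main obstacle is that when $m>n$ the $m$ mid-hyperplanes $\{a_i^T x=c_{i,j_i}\}$ generically share no common point, so $Q_J$ is itself not centrally symmetric. I would pick any base point $z_J\in Q_J$ and replace $Q_J$ by $\tilde Q_J := \conv\bigl(Q_J\cup (2z_J-Q_J)\bigr)$, which is centrally symmetric about $z_J$ and contains $Q_J$. To verify $2\odot\tilde Q_J\subseteq(1+\epsilon)P$, write $y:=a_i^T z_J$ and $I_{i,j_i}=[c-r,c+r]$: since both $a_i^T Q_J$ and $y$ lie in $[c-r,c+r]$, one has $a_i^T\tilde Q_J\subseteq[c-3r,c+3r]$, and dilating $\tilde Q_J$ by $2$ about its centroid $z_J$ gives
\[
a_i^T(2\odot\tilde Q_J)=2\,a_i^T\tilde Q_J - y\subseteq[c-7r,c+7r]=7\odot I_{i,j_i}\subseteq[-(1+\epsilon)b_i,(1+\epsilon)b_i]
\]
by the design of the partition. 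This double-dilation overhead---an effective factor of $7$ rather than $2$---is precisely the reason the base in the exponent is $4/3$ instead of the $2$ that appears in the cube/zonotope bounds, and the factor $2^m$ in the count arises simply from the two-sidedness of $[-b_i,b_i]$ across all $m$ constraints.
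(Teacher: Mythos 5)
Your proposal is correct and follows essentially the same route as the paper: the same geometrically graded slab decomposition with ratio $4/3$ (your recurrence $d_{k+1}\le\tfrac43 d_k+\tfrac13\epsilon b_i$ is exactly the paper's choice of endpoints $(1-((4/3)^{\alpha}-1)\epsilon)b_i$), the same symmetrization of each nonempty cell about an arbitrary point of it, and the same slab-by-slab verification — your ``factor $7$ about the center'' window $[c-7r,c+7r]$ coincides with the paper's ``factor $4$ about an arbitrary interior point'' bound $[4L-3U,\,4U-3L]$.
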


Finally, we prove a lower bound on the \twocov number of the 
Euclidean unit ball $B_2^n$ which, by Corollary~\ref{cor:twocovlp}, is sharp, 
up to a logarithmic factor.

\begin{theorem}\label{thm:ballneedsmany}
For any $\varepsilon\in(0,1/2)$, any \twocov of the Euclidean unit ball $B_2^n$ 
consists of at 
least \allowbreak $2^{-O(n)}(1/\epsilon)^{(n-1)/2}$ convex bodies.
\end{theorem}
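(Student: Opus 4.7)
The plan is to show that each covering body $Q_i$ in a $(2,\epsilon)$-covering of $B_2^n$ can meet the sphere $S^{n-1}$ only in a small spherical cap, and then invoke the classical lower bound on the number of spherical caps needed to cover $S^{n-1}$.

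\textbf{Cap containment.} Fix $i$ and write $c=c(Q_i)$. Since $c\in Q_i\subseteq(1+\epsilon)B_2^n$, we have $\|c\|\leq 1+\epsilon$. For any $x\in Q_i\cap S^{n-1}$, the identity $2\odot Q_i=2Q_i-c$ together with the hypothesis $2\odot Q_i\subseteq(1+\epsilon)B_2^n$ yield $\|2x-c\|\leq 1+\epsilon$; squaring and using $\|x\|=1$ gives
\[
\|c\|^2-4\langle x,c\rangle+4\;\leq\;(1+\epsilon)^2.
\]
Combined with the elementary estimate $\|c\|\geq 2\|x\|-\|2x-c\|\geq 1-\epsilon$, this confines $\|c\|$ to $[1-\epsilon,1+\epsilon]$ (so, in particular, $c\neq 0$). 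Dividing through by $4\|c\|$ and writing $v_i=c/\|c\|$, the problem reduces to minimizing
\[
f(t)\;=\;\frac{4-(1+\epsilon)^2}{4t}+\frac{t}{4}
\]
on $[1-\epsilon,1+\epsilon]$. A one-line derivative check shows that $f$ is decreasing on this interval (since $\sqrt{4-(1+\epsilon)^2}>1+\epsilon$ for $\epsilon<1$), with minimum $f(1+\epsilon)=1/(1+\epsilon)\geq 1-\epsilon$. Consequently $\langle x,v_i\rangle\geq 1-\epsilon$ for every $x\in Q_i\cap S^{n-1}$, so $Q_i\cap S^{n-1}$ lies in a spherical cap around $v_i$ of angular radius at most $\arccos(1-\epsilon)\leq\sqrt{3\epsilon}$.

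\textbf{Sphere-covering estimate.} Since the $Q_i$'s cover $B_2^n$ and hence $S^{n-1}$, the resulting caps cover the sphere. A standard volumetric estimate (using $\sin\phi\leq\phi$ in polar coordinates) shows that a spherical cap on $S^{n-1}$ of angular radius $\alpha$ has normalized surface measure at most $\alpha^{n-1}$, whence
\[
N\;\geq\;\alpha^{-(n-1)}\;\geq\;(3\epsilon)^{-(n-1)/2}\;=\;3^{-(n-1)/2}\,\sqrt{\epsilon}\cdot\epsilon^{-n/2}.
\]
Absorbing the factor $3^{-(n-1)/2}\sqrt{\epsilon}$ into $2^{-O(n)}$ (which is legitimate in the only nontrivial regime, $\epsilon\gtrsim 2^{-cn}$) then gives the claimed bound $N\geq 2^{-O(n)}(1/\epsilon)^{n/2}$.

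\textbf{Main obstacle.} The crux is the cap-containment step: it is essential that the one-variable optimization yields a cap of \emph{angular} radius $O(\sqrt{\epsilon})$, equivalently of Euclidean height $O(\epsilon)$, rather than the much weaker height $O(\sqrt{\epsilon})$ that a careless estimate would produce. A cosine bound of only $1-\sqrt{\epsilon}$ would halve the exponent of $1/\epsilon$ and give merely $\epsilon^{-(n-1)/4}$, failing to match the upper bound Corollary~\ref{cor:twocovlp}. Once the $O(\epsilon)$ height is established, the sphere-covering step is a routine surface-area comparison.
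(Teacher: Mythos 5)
Your proof is correct and follows essentially the same route as the paper's: both exploit the point $2x-c(Q_i)\in 2\odot Q_i\subseteq(1+\epsilon)B_2^n$ to force $\langle x,c(Q_i)\rangle\geq 1-\epsilon$ for every $x\in Q_i\cap S^{n-1}$, hence confine $Q_i\cap S^{n-1}$ to a cap of angular radius $O(\sqrt{\epsilon})$, and then conclude by comparing spherical measures. (One small slip in your derivative check: $\sqrt{4-(1+\epsilon)^2}>1+\epsilon$ holds only for $\epsilon<\sqrt{2}-1$, so $f$ need not be decreasing on all of $[1-\epsilon,1+\epsilon]$; but the global minimum $\tfrac12\sqrt{4-(1+\epsilon)^2}$ still exceeds $1-\epsilon$ whenever $\epsilon\geq 1/5$, so the conclusion $\langle x,v_i\rangle\geq 1-\epsilon$ survives for all $\epsilon\in(0,1)$, and in any case only small $\epsilon$ matters for the asymptotic bound.)
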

\begin{proof}
Let $\{Q_i\}_{i=1}^N$ be a \twocov of $B_2^n$ with respective 
centroids $c_i$. Let $p \in \mathbb{S}^{n-1}$ and let $c$ be the centroid of a 
$Q_i$ such that $p \in Q_i$. 
First, we show that $\langle p, c \rangle \geq 1-\epsilon$, that is, $Q_i$ is 
contained in a small solid cap. Suppose by contradiction that $\langle p, c 
\rangle < 
1-\epsilon$. By the definition of a \twocov we need that $\|p + 
(p-c)\| \leq 1+\epsilon$. This implies $\langle p, p + (p-c)\rangle \leq 
1+\epsilon$ and we obtain the following contradiction:
\[\langle p, p+(p-c)\rangle = 2\langle p, p\rangle + \langle p, -c\rangle > 2 + 
\epsilon -1 = 1 + \epsilon.\]
Also by the definition of a \twocov, we need $\|c\| \leq 
1+\epsilon$. Thus, we can show $\|p-c\|$ is small:
\begin{alignat*}{1}
\langle p-c, p-c\rangle &= \langle p, p\rangle + \langle c, c \rangle + 
2\langle p, -c\rangle\\
&\leq 1 + (1+\epsilon)^2 + 2(\epsilon -1)\\
&\leq 5\epsilon.
\end{alignat*}
Thus, for every $Q_i$, $Q_i \cap \mathbb{S}^{n-1}$ is contained in a cap of 
radius $\sqrt{5\epsilon}$. 
Denoting by $\sigma(\cdot)$ the uniform probability measure on the sphere, this 
means that for any convex body $Q_i$ in the \twocov, $\sigma(Q_i) \leq 
2^{O(n)}\epsilon^{(n-1)/2}$ (cf. \cite[Lemma~3.1]{BW03}). Since a \twocov of 
$B_2^n$ needs to cover 
all of $\mathbb{S}^{n-1}$, we obtain the desired lower bound on $N$.
\end{proof}

\section{\texorpdfstring{$(2,\epsilon)$}{(2,epsilon)}-coverings via modulus of 
smoothness}\label{sec:modsmoothness}

For a convex body $K$, we will consider its \emph{gauge function} 
$\norm{\cdot}_K$, 
defined by $\norm{x}_K = \inf\{s \st x \in sK\}$. If $K$ is origin symmetric, 
then $\norm{\cdot}_K$ defines a norm. 
\begin{definition}[Modulus of smoothness]\label{def:modsmoothness}
The \emph{modulus of smoothness} of an origin-symmetric convex body $K$, 
$\rho_K(\tau): (0,1) \rightarrow (0,1)$, is defined by 
\[\rho_K(\tau) = \frac{1}{2} \sup_{\norm{x}_K = \norm{y}_K = 1}(\norm{x+\tau 
y}_K + \norm{x-\tau y}_K 
-2).\]
\end{definition}
We remark first that any origin symmetric body $K$ has modulus of smoothness 
$\rho_K(\tau) \leq \tau$, this follows from the subadditivity of the norm.
The modulus of smoothness of $K$ measures how well $K$ can be locally 
approximated by hyperplanes: If $\|x\|_K = 1$ and $\|\tau y\|_K = \tau$ and 
both $x+y$ and $x-y$ lie on a support hyperplane of $K$ at $x$, then both 
$\|x+\tau y\|_K,\|x-\tau y\|_K \geq 1$, but we also have the upper bound of
\[\|x\pm \tau y\|_K \leq 1 + 2\rho_K(\tau).\]
If $\rho_K(\tau)$ can be bounded by a polynomial of degree higher than $1$, say 
$\tau^2$, then $x\pm \tau y$ are closer to the boundary of $K$ compared to what 
subadditivity, $\|x\pm \tau y\|_K \leq \|x\|_K + \|\tau y\|_K$, alone yields. 
Still assuming $\rho_K(\tau) \leq \tau^2$ and letting $\epsilon \in (0,1)$, 
this means that all points $y\in K$ with $\|x-y\| \leq \sqrt{\epsilon}$ are 
approximated up to an additive $\epsilon$ by the tangential hyperplane at $x$. 
This behaviour of some norms is exploited in the next theorem.
\begin{theorem}\label{cov:smooth}
Let $K \subseteq \R^n$ be an origin symmetric convex body, and 
$\epsilon\in(0,1)$. Assume 
that the modulus of smoothness of $K$ is bounded by 
\[\rho_K(\tau) \leq C\tau^q\]
with some constants $C,q>1$.
Then, there exists a \twocov of $K$ consisting of 
\[2^{O(n)}\log\left(1+1/\epsilon\right)\left(\frac{C}{\epsilon}\right)^{n/q
}+O(C)^{n/(q-1)}\]
centrally symmetric convex bodies. The encoding length of each such body is a polynomial in the encoding length of $K$.
\end{theorem}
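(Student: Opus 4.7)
The plan is to decompose $K$ into a central core covered by the standard packing Lemma~\ref{lemma:symmetric}, and an outer region partitioned into dyadic radial shells, each covered by ``flat caps'' whose scale-by-$2$ behavior is controlled by the modulus of smoothness. Concretely, set $h_j=(2^j-1)\varepsilon$ and $\delta_j=2^j\varepsilon$, and define
\[
S_j=\{x\in K:1-h_{j+1}\le \|x\|_K\le 1-h_j\},\qquad j=0,\dots,J,
\]
where the cutoff $J$ is chosen so the residual core $\{\|x\|_K\le 1-h_{J+1}\}$ admits a simple covering: either $h_{J+1}$ reaches an absolute constant (when the bound $C\tau^q$ is useful throughout), or $h_{J+1}\simeq C^{-1/(q-1)}$, the scale at which $C\tau^q$ ceases to improve upon the trivial bound $\tau$ coming from subadditivity. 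Either way $J=O(\log(1/\varepsilon))$.

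The core $(1-h_{J+1})K$ is covered by translates of $\alpha K$ with $\alpha=(h_{J+1}+\varepsilon)/2$, via Lemma~\ref{lemma:symmetric}; the number of pieces is $(O(1)/h_{J+1})^n$, which is $2^{O(n)}$ when $h_{J+1}$ is constant and $O(C)^{n/(q-1)}$ when $h_{J+1}\simeq C^{-1/(q-1)}$, producing the second summand of the claim.

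For each shell $S_j$ I would use caps $Q=K\cap\{\langle u,x\rangle\ge 1-h_{j+1}\}\cap(x_0+T_j)$, where $u$ is the outer normal at a boundary point $x_0$ and $T_j$ is a tangential cylinder of width $\tau_j\simeq(2^j\varepsilon/C)^{1/q}$; by Lemma~\ref{lem:symmetriccovering} these may be replaced by centrally symmetric bodies at the cost of an absorbed $5^n$ factor. The heart of the proof is checking $2\odot Q\subseteq(1+\varepsilon)K$. For $p\in 2\odot Q$, write $p=r\hat c+z$ with $\hat c\in\partial K$ the normalized radial direction, $r\le 1-h_{j+1}+\delta_j/2$, and $z$ tangent to $\partial K$ at $\hat c$ with $\|z\|_K\le 2\tau_j$. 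Applying the modulus-of-smoothness hypothesis to the unit tangent direction $y=z/\|z\|_K$ at $\hat c$ --- and using that both $\|\hat c\pm\tau y\|_K\ge 1$ by convexity (since $\hat c\pm\tau y$ lies on the support hyperplane at $\hat c$), so the sum bound in Definition~\ref{def:modsmoothness} forces $\|\hat c+\tau y\|_K\le 1+2C\tau^q$ --- yields the key inequality
\[
\|r\hat c+z\|_K=r\|\hat c+z/r\|_K\le r(1+2C(\|z\|_K/r)^q)=r+2C\|z\|_K^q/r^{q-1}.
\]
Plugging in the parameter choices, and using that $r^{q-1}$ is bounded below by a constant in the range $r\ge 1-h_{J+1}$, makes the right-hand side at most $1+\varepsilon$.

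Finally, the cap count per shell is bounded by a volumetric packing argument analogous to Lemma~\ref{lemma:symmetric}: a $\tau_j$-separated set of centroids in $S_j$ gives disjoint translates of $(\tau_j/2)K$ contained in an $O(\tau_j)$-neighborhood of $K$, so at most $2^{O(n)}/\tau_j^n=2^{O(n)}(C/(2^j\varepsilon))^{n/q}$ caps suffice. Summing over $j=0,\dots,J$ and crudely bounding each term by the largest one (attained at $j=0$) gives $2^{O(n)}\log(1/\varepsilon)(C/\varepsilon)^{n/q}$, the first summand of the claim. The main technical obstacle is the scale-up verification: all parameters must be tuned simultaneously across every shell, and the $1/r^{q-1}$ factor in the key inequality is precisely what forces the $C^{-1/(q-1)}$ cutoff and the separate core bound in the final answer.
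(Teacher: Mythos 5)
Your proposal is correct in substance and rests on the same two mechanisms as the paper's proof: the key inequality $\norm{\hat c+\tau y}_K\le 1+2C\tau^q$ for tangential unit directions $y$ (derived exactly as you do, from $\norm{\hat c-\tau y}_K\ge 1$ together with the definition of $\rho_K$), and a dyadic radial decomposition with $O(\log(1/\epsilon))$ levels so that doubling never overshoots $(1+\epsilon)K$ in the radial direction. The difference is organizational. The paper fixes a single tangential scale $\delta\asymp(\epsilon/C)^{1/q}$, builds one cylinder $C_p=B_p+[0,-p]$ per boundary direction reaching all the way to the origin, and slices it radially; thus the interior is covered for free and the $O(C)^{n/(q-1)}$ term appears only as a fallback for large $\epsilon$. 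You transpose this: the tangential width grows like $(2^j\epsilon/C)^{1/q}$ with the depth of the shell, and the residual core is covered separately via Lemma~\ref{lemma:symmetric}. Your version buys something the paper's does not: the per-shell counts $2^{O(n)}\bigl(C/(2^j\epsilon)\bigr)^{n/q}$ form a convergent geometric series, so summing it exactly (rather than taking $J$ times the $j=0$ term) would remove the $\log(1/\epsilon)$ factor from the first summand.

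Two points need repair in a full write-up. First, the cap as literally written, $K\cap\{\langle u,x\rangle\ge 1-h_{j+1}\}\cap(x_0+T_j)$, reaches the boundary of $K$, so $2\odot Q$ would protrude to height about $1+h_{j+1}$, which exceeds $1+\epsilon$ for $j\ge 1$; you must intersect with the slab $1-h_{j+1}\le\langle u,x\rangle\le 1-h_j$, as your stated bound on $r$ tacitly assumes (and that bound should read $r\le 1-h_j+\delta_j/2$). Second, the decomposition $p=r\hat c+z$ must be taken with respect to the fixed cap direction $x_0$ and its support hyperplane, not the varying radial direction of $p$, so that $\norm{x_0\pm ty}_K\ge1$ holds for the tangential $y$; and checking that the caps actually cover $S_j$ requires the analogue of \eqref{eq:capscoverbd}, namely that a boundary point within $\tau_j$ of $x_0$ decomposes into a tangential part of size $O(\tau_j)$ plus a radial defect of order $\rho_K(\tau_j)=O(2^j\epsilon)$. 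Both repairs cost only absolute constants and do not affect the count.
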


\begin{figure}[ht]
 \includegraphics[width=.3\textwidth]{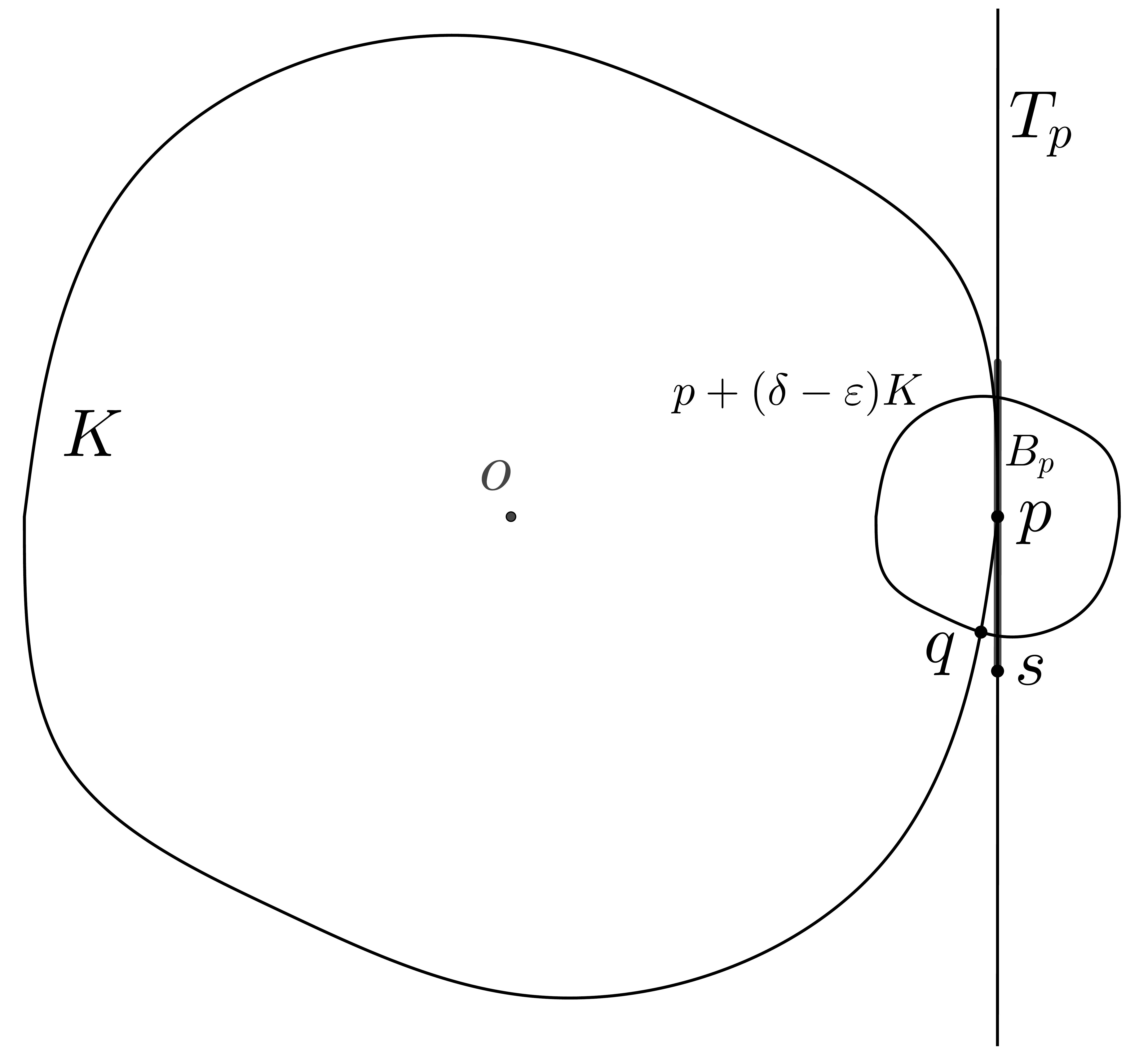}
\caption{Proof of \eqref{eq:capscoverbd}.}\label{fig:smoothtangent}
\end{figure}

\begin{proof}
Set $\delta = \frac{1}{4} \left(\frac{\epsilon}{C}\right)^{1/q}$. 
We may assume that $\epsilon \leq 
\left(\frac{1}{8C^{1/q}}\right)^{q/(q-1)}$, in which case $\delta-\epsilon 
\geq 
\delta/2$. Otherwise, we may apply Lemma~\ref{lemma:symmetric} and obtain 
a \twocov of $K$ consisting of $O(C)^{n/(q-1)}$ bodies. 
We denote $\norm{\cdot}_K$ by $\norm{\cdot}$.

We first describe a \twocov[(2,2\epsilon)] of $K$ only in the neighborhood 
of a point and then, using a packing argument, we extend this 
construction to obtain a \twocov[(2,2\epsilon)] for all of $K$.

Fix a point $p$ on the boundary of $K$ that is, $\norm{p} = 1$. Denote by $T_p$ 
a supporting hyperplane of $K$ at $p$. Let $B_p$ be the intersection of $T_p$ 
with $p + \delta K$, i.e. $B_p := T_p \cap \{x \st \norm{x-p} \leq \delta\}$. 

First, we show that 
\begin{equation}\label{eq:capscoverbd}
\bd{K} \cap \left(p+(\delta - \epsilon)K\right)\subseteq \conv(0, B_p).
\end{equation}
Indeed, let $q$ be a point in $\bd{K} \cap \left(p+(\delta - 
\epsilon)K\right)$, and let $L$ 
denote the two-dimensional linear plane spanned by $p, q$ and the origin $o$, 
see 
Figure~\ref{fig:smoothtangent}.
Clearly, $L\cap T_p$ is a line, and there are two points on this line at 
distance $\delta$ from $p$. Let $s$ denote the point of these two which is on 
the same side of the line $op$ as $q$. That is, $s$ is a point on the lateral 
surface of the cone $\conv(0, B_p)$. By the assumption on the modulus of 
smoothness of $K$, we have $s^\prime:=s/\norm{s}$ is at distance at most 
$\epsilon$ from $s$ (a detailed computation of a similar fact is given below in 
this proof). Thus,  
\begin{equation}\label{eq:sisfar}
 \norm{s^\prime-p}\geq\delta-\epsilon.
\end{equation}
Now, $L$ is a normed plane with unit circle $K\cap L$ and $p$ is a unit vector 
in $L$. It is a classical fact in the theory of normed planes 
\cite[Proposition~31]{martiniswanepoel} that as a point moves along the curve 
$K\cap L$ starting at $p$ and ending at $-p$, the distance (w.r.t. $\|\cdot\|_K$) of 
the moving point to $p$ is increasing. Thus, by \eqref{eq:sisfar}, the arc of 
$K\cap L$ between $p$ and $s^\prime$ contains $q$, which yields that 
$q$ is in the cone $\conv(0, B_p)$, proving \eqref{eq:capscoverbd}.

Next, instead of the cone $\conv(0, B_p)$, we will consider the cylinder 
\begin{alignat*}{1}
C_p = B_p + [0, -p].
\end{alignat*}
Clearly,  we have $\conv(0, B_p) \subseteq C_p$.

We may assume that $\epsilon$ is of the form 
$\epsilon=\left(2^{k}-1\right)^{-1}$, where $k$ is a positive integer.
For $i \in [k]$, consider the following slice of $C_p$:
\begin{alignat}{1}\label{mod.cylinders}
C_p(i) = \left(B_p + 
[-(2^i-1)\epsilon p, -(2^{i-1}-1)\epsilon p]\right).
\end{alignat}

Clearly, 
$2\odot C_p(i)\subseteq \widehat{C_p} := 2\odot B_p + [\epsilon p, 
-\frac{3}{2}p]$ and the centroid $c(C_p(i))$ is at 
$(1-(\frac{3}{2}2^{i-1}-1))\epsilon p$ for each $i\in[k]$.

We claim that $\widehat{C_p} \subseteq (1+2\epsilon)K$. 
Since $\delta\leq 1/4$ and $K=-K$, we have $2\odot B_p -\frac{3}{2}p\subseteq 
K$. Thus, it suffices to check that $2\odot B_p + \epsilon p\subseteq 
(1+2\epsilon)K$.

Let $x \in 2\odot B_p + \epsilon p$, i.e. $x = p + 2(z-p) + \epsilon p$ for 
some $z 
\in B_p$. We will show that $\|p + 2(z-p)\|\leq 1+2\epsilon$.
Since both $p$ and $z$ lie in $T_p$, then so do $p + 2(z-p)$ and $p + 2(p-z)$, 
and thus, we have $\|p + 2(z-p)\|, \|p + 2(p-z)\| \geq 1$.

$\|2(z-p)\| \leq 2\delta = \frac{1}{2}\left(\frac{\epsilon}{C}\right)^{1/q}$ 
and so by the 
assumption on the modulus of smoothness of $K$, we obtain
\[\|p + 2(z-p)\| \leq 2C\|2(z-p)\|^q + 1\leq 1 + \epsilon.\]
Thus, $\widehat{C_p} \subseteq (1+2\epsilon)K$, and hence,
\begin{equation*}
2\odot C_p(i) \subseteq (1+2\epsilon)K 
\end{equation*}
for each $i \in [k]$.

Since, by \eqref{eq:capscoverbd}, all points on the boundary of $K$ at distance 
at most $\delta 
- \epsilon$ from $p$ are covered by $C_p$, we see that all points $x$, such 
that $\|\frac{x}{\|x\|}-p\|\leq \delta - \epsilon$ are covered by one of the 
slices of $C_p$. Thus, in order to extend the above construction to a 
\twocov[(2,2\epsilon)] of $K$, we pick points 
$\{p_i\}_{i=1}^N$ on the boundary of $K$ such that $\bd K \subseteq 
\bigcup_{i=1}^N p_i + (\delta  - \epsilon)K$. By Lemma \ref{lemma:symmetric}, 
\[N = 
2^{O(n)}\left(\frac{1}{(\delta - \epsilon)}\right)^n = 
2^{O(n)}\left(\frac{C}{\epsilon}\right)^{n/q}
\]
such points suffice.

Thus, we obtain a \twocov[(2,2\epsilon)] for $K$ by constructing $C_{p_i}$ for 
each $i \in 
[N]$ and slicing each $C_{p_i}$ as in \eqref{mod.cylinders}. Finally, replacing 
$\epsilon$ by $\frac{\epsilon}{2}$, we indeed get a \twocov of $K$ using 
$2^{O(n)}(\frac{C}{\epsilon})^{n/q}\log\left(\frac{1}{\epsilon}\right)$ convex 
bodies, each described by a polynomial in the encoding length of $K$, see \cite{GroetschelLovaszSchrivjer88}.
\end{proof}

\begin{theorem}[Modulus of smoothness for $\ell_p$ spaces, 
\cite{lindenstrauss1963}]\label{thm:lindenstrauss}
We have
\[
\rho_{\ell_p}(\tau) = 
\left\{
\begin{array}{ll}
	(((1+\tau)^p + (1-\tau)^p)/2)^{1/p}-1 \leq 2^p\tau^2, \text{ 
if } 2 \leq p < \infty\\
	(1+\tau^p)^{1/p}-1 \leq \tau^p/p, \text{ if } 1 	\leq p \leq 2
\end{array}
\right.
\]
\end{theorem}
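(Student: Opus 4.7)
The plan is to attack the two layers of the statement separately: the exact closed forms for $\rho_{\ell_p}(\tau)$ and the polynomial upper bounds that follow. The lower bounds on the supremum in Definition~\ref{def:modsmoothness} are the easy half; I would exhibit explicit extremal pairs. For $p\in[1,2]$ take $x=e_1$, $y=e_2$, so that $\norm{x\pm\tau y}_p=(1+\tau^p)^{1/p}$ and hence $\rho_{\ell_p}(\tau)\ge(1+\tau^p)^{1/p}-1$. For $p\ge 2$ take $x=2^{-1/p}(e_1+e_2)$, $y=2^{-1/p}(e_1-e_2)$, both of unit $\ell_p$-norm; a direct computation gives $x\pm\tau y=2^{-1/p}((1\pm\tau)e_1+(1\mp\tau)e_2)$ and therefore $\norm{x\pm\tau y}_p=\bigl(((1+\tau)^p+(1-\tau)^p)/2\bigr)^{1/p}$, matching the claimed formula.

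The nontrivial step is showing these candidates are optimal. For $p\ge 2$ the natural tool is Hanner's inequality
\[
\norm{f+g}_p^p+\norm{f-g}_p^p\le(\norm{f}_p+\norm{g}_p)^p+\bigl|\norm{f}_p-\norm{g}_p\bigr|^p,
\]
applied with $f=x$, $g=\tau y$ and $\norm{x}_p=\norm{y}_p=1$, which gives $\norm{x+\tau y}_p^p+\norm{x-\tau y}_p^p\le(1+\tau)^p+(1-\tau)^p$. Combined with the power-mean inequality $\frac{a+b}{2}\le\bigl(\frac{a^p+b^p}{2}\bigr)^{1/p}$, valid for $p\ge 1$, this produces the desired upper bound on $\norm{x+\tau y}_p+\norm{x-\tau y}_p$ and hence on $\rho_{\ell_p}(\tau)$. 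For $p\in[1,2]$ Hanner points the wrong way, so instead I would invoke Clarkson's second inequality
\[
\norm{f+g}_p^q+\norm{f-g}_p^q\le 2\bigl(\norm{f}_p^p+\norm{g}_p^p\bigr)^{q-1}, \qquad q=\tfrac{p}{p-1}\ge 2,
\]
and again apply the power-mean inequality (now with exponent $q\ge 1$) to conclude $\norm{x+\tau y}_p+\norm{x-\tau y}_p\le 2(1+\tau^p)^{1/p}$.

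The polynomial upper bounds are then elementary. For $p\in[1,2]$, Bernoulli's inequality $(1+u)^{1/p}\le 1+u/p$ (valid for $u\ge 0$, $p\ge 1$) with $u=\tau^p$ gives $(1+\tau^p)^{1/p}-1\le\tau^p/p$ directly. For $p\ge 2$ the same Bernoulli step yields $\rho_{\ell_p}(\tau)\le\bigl((1+\tau)^p+(1-\tau)^p-2\bigr)/(2p)$; I would bound the numerator by a Taylor-with-remainder estimate of the even function $g(\tau):=(1+\tau)^p+(1-\tau)^p-2$, noting $g(0)=g'(0)=0$ and $g''(\tau)=p(p-1)\bigl((1+\tau)^{p-2}+(1-\tau)^{p-2}\bigr)\le p(p-1)\cdot 2^{p-1}$ on $[0,1]$, whence $g(\tau)\le p(p-1)\cdot 2^{p-2}\tau^2$, and therefore $\rho_{\ell_p}(\tau)\le(p-1)\cdot 2^{p-3}\tau^2\le 2^p\tau^2$.

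The main obstacle is Hanner's inequality itself, which is a classical but nontrivial fact; I would simply cite it (and similarly Clarkson). For the present paper only the polynomial bounds are actually used, so in principle one could bypass Hanner/Clarkson by directly reducing the sup in Definition~\ref{def:modsmoothness} to a two-dimensional calculus problem via a symmetrization/rearrangement argument on coordinates of $x$ and $y$. Still, routing through Hanner/Clarkson delivers the sharp closed forms at essentially no extra cost.
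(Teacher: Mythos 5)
The paper gives no proof of this theorem at all: it is quoted verbatim from Lindenstrauss (1963), so there is no internal argument to compare yours against. What you propose is the standard derivation of the sharp moduli, and it is essentially correct: the two-dimensional extremal pairs give the lower bounds on the supremum, and Hanner's inequality (for $p\ge 2$), respectively Clarkson's second inequality (for $1<p\le 2$), combined with the power-mean inequality, give the matching upper bounds. Two small points of care: the case $p=1$ must be handled separately, since $q=p/(p-1)$ degenerates, but there $\rho_{\ell_1}(\tau)=\tau$ is exactly the triangle-inequality bound already noted after Definition~\ref{def:modsmoothness}; and the stated equalities presuppose dimension at least $2$. For the paper's purposes only the polynomial bounds $C\tau^q$ are fed into Theorem~\ref{cov:smooth}, so the sharp closed forms (and hence Hanner/Clarkson) are a luxury, though a harmless one.

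There is one genuine, if minor, arithmetic slip in your last step for $p\ge 2$. Bounding $g''$ uniformly on $[0,1]$ by $p(p-1)2^{p-1}$ yields $\rho_{\ell_p}(\tau)\le (p-1)2^{p-3}\tau^2$, and the final inequality $(p-1)2^{p-3}\le 2^p$ holds only for $p\le 9$; as written you have not verified the stated constant $2^p$ for large $p$. The repair is easy: for $\tau\ge 2^{-p}$ the subadditivity bound $\rho_{\ell_p}(\tau)\le\tau\le 2^p\tau^2$ already suffices, while for $\tau<2^{-p}$ one has $p\tau\le 1/2$, hence $g''(s)\le 2p(p-1)e^{p\tau}\le 2e^{1/2}p(p-1)$ on $[0,\tau]$ and therefore $\rho_{\ell_p}(\tau)\le \tfrac{1}{2}e^{1/2}(p-1)\tau^2\le 2^p\tau^2$. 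In any case the precise constant is immaterial downstream: Theorem~\ref{cov:smooth} only needs some constant $C$ depending on the fixed $p$, and $C^{O(n)}$ is absorbed into $2^{O(n)}$.
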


\begin{proof}
By \cite[end of Section~2]{lindenstrauss1963}, we only need to show 
$(((1+\tau)^p + |1-\tau)^p)/2)^{1/p}-1 \leq 2^p\tau^2$ for $\tau\in(0,1)$ and $2\leq p<\infty$.

By computing $\frac{\di}{\di p}\left[ (1+\tau)^p + (1-\tau)^p \right]$, and then 
$\frac{\di}{\di \tau}\frac{\di}{\di p}\left[ (1+\tau)^p + (1-\tau)^p \right]$, one obtains that
$(1+\tau)^p + (1-\tau)^p\leq (1+\tau)^{\ceil{p}} + (1-\tau)^{\ceil{p}}$. Next, 
by taking the binomial expansion, one checks that 
$\left[(1+\tau)^{\ceil{p}} + (1-\tau)^{\ceil{p}}\right]\leq \left(1+2^p\tau^2\right)^{\floor{p}}$,
completing the proof.
\end{proof}

Theorems~\ref{cov:smooth} and \ref{thm:lindenstrauss}
imply the following.
\begin{corollary}[\twocovs for $\ell_p$ balls]\label{cor:twocovlp}
For small enough $\epsilon$, there exists a \twocov for $\ell_p$ 
balls using $2^{O(n)}\log(1+1/\epsilon)(\frac{1}{\epsilon})^{(n/2)}$ convex 
bodies for $2 \leq p < \infty$ and\\ 
$2^{O(n)}\log(1+1/\epsilon)(\frac{1}{\epsilon})^{(n/p)}$ convex bodies for $1 
\leq p \leq 2$.
\end{corollary}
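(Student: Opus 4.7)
The plan is essentially a direct substitution: plug the Lindenstrauss estimates on $\rho_{\ell_p}(\tau)$ into Theorem~\ref{cov:smooth}. The only conceptual point is that $p$ is regarded as a fixed constant (independent of $n$), so any factor that depends on $p$ alone will be absorbed into $2^{O(n)}$ or into the $O(\cdot)$ notation. This makes the argument largely a matter of bookkeeping.

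For the range $2 \leq p < \infty$, I would apply Theorem~\ref{cov:smooth} with $C = 2^p$ and $q = 2$. The main term becomes
\[
2^{O(n)}\log(1/\epsilon)\left(\frac{2^p}{\epsilon}\right)^{n/2},
\]
and since $(2^p)^{n/2} = 2^{pn/2} = 2^{O(n)}$ for fixed $p$, this collapses to the claimed $2^{O(n)}\log(1/\epsilon)(1/\epsilon)^{n/2}$. The additive term $O(C)^{n/(q-1)} = O(2^p)^n$ is absorbed in the same way, so the dominant term determines the bound.

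For the range $1 < p \leq 2$, I would apply Theorem~\ref{cov:smooth} with $C = 1/p$ and $q = p$, which gives a main term of
\[
2^{O(n)}\log(1/\epsilon)\left(\frac{1}{p\epsilon}\right)^{n/p},
\]
and again $(1/p)^{n/p}$ is a constant (depending on $p$) raised to a power linear in $n$, so it is $2^{O(n)}$ and the bound simplifies to $2^{O(n)}\log(1/\epsilon)(1/\epsilon)^{n/p}$. The additive term $O(1/p)^{n/(p-1)}$ depends only on $p$ and is likewise $2^{O(n)}$. The endpoint $p = 1$ is not directly covered by Theorem~\ref{cov:smooth} (since that theorem requires $q > 1$), but in this case the claimed bound $2^{O(n)}(1/\epsilon)^n$ is precisely what Lemma~\ref{lemma:symmetric} yields, so this case is handled separately.

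There is no substantive obstacle; the only point requiring a little care is that for $p$ close to $1$ from above, the exponent $n/(p-1)$ in the additive term becomes large, so the implicit constant in $2^{O(n)}$ depends on $p$. Since $p$ is treated as fixed throughout, this is harmless, but it is worth noting that the bounds degenerate as $p \to 1^+$.
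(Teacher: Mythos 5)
Your proposal is correct and coincides with the paper's own treatment: the paper gives no separate argument for Corollary~\ref{cor:twocovlp} beyond stating that the Lindenstrauss estimates plugged into Theorem~\ref{cov:smooth} yield it, which is exactly your substitution (including the observation that $p$-dependent constants are absorbed into $2^{O(n)}$, and that $p=1$ falls back on Lemma~\ref{lemma:symmetric}). One tiny formality: for $1<p\le 2$ your choice $C=1/p$ is less than $1$ and so does not literally satisfy the hypothesis $C>1$ of Theorem~\ref{cov:smooth}; since $\tau^p/p\le 2\tau^p$, take $C=2$ instead, which changes nothing in the final bound.
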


\section{Using \texorpdfstring{$(2,\epsilon)$}{(2,epsilon)}-coverings for the 
Closest Vector Problem}\label{sec:cvpalg}

We first recall the goal and some important notions of this section: We are 
given a rational lattice $\Lambda(A) = \{Ax \st x \in 
\Z^n\}$, with $ A \in \Q^{n \times n}$ and a target vector $t \in \Q^n$, and we 
would like to solve $(1+\epsilon)$-approximate \cvp[K], i.e. find a lattice 
vector $v \in \Lambda(A)$ such that $\|v-t\|_K \leq (1+\epsilon)\min_{w\in 
\Lambda(A)}\|w-t\|_K$. $\|\cdot\|_K$ is defined by $\|x\|_K = \inf\{s \st x 
\in sK\}$, if $K$ is origin symmetric and convex, this defines a norm. If $0$ 
is not the center of symmetry but in the interior of $K$ then we lose the 
symmetry, i.e. $\|x\|_K \neq \|-x\|_K$. We denote by $b$ the encoding length 
of the relevant input: $A$, $t$, $\epsilon$, encoding length of $K$, etc.

In this section, we will first describe how a \twocov for 
$K$ using $N$ convex bodies boosts any \acvp[2]{} solver for general 
norms to a \acvp{K} solver at the expense of a factor $N 2^{O(n)}\poly(b, 
\frac{1}{\epsilon})$ in the running time. This algorithm, together 
with the construction of Propositions~\ref{covering:zonotope} and 
\ref{covering:polytope} directly implies a \acvp{} solver 
for polytopes and zonotopes with running time of
$2^{O(n+m)}(\log(1+1/\epsilon))^{m}$ times some polynomial in $b$ and $n$ and 
with space requirement that of the 
\acvp[2]{} solver used.

Next, we are going to adapt the construction of Theorem~\ref{cov:smooth} to 
yield a randomized algorithm, that for some fixed point $p \in K$, generates a 
local \twocov for $K$ containing $p$. This yields a randomized 
\acvp{} solver 
with the improved running time for $\ell_p$ norms and with space requirement 
only depending on that of the $2$-approximate \cvp{} solver used. This 
construction can also be derandomized.

The boosting procedure we are going to describe assumes that we are able to 
sample uniformly within $K$ and that we can calculate a separating hyperplane 
at any point on the boundary of $K$. However, if only a weak membership and a 
weak separation oracle is provided, the procedure can be adapted such that it 
suffices to sample almost uniformly, see the algorithm of Dyer, Frieze and 
Kannan \cite{DBLP:journals/jacm/DyerFK91}, and to only calculate a weakly 
separating hyperplane. We neglect this implementation detail.

As for the convex body $K$, we assume that
$n^{-3/2}B_2^n \subseteq K \subseteq B_2^n$, and thus,
\begin{equation}\label{eq:Kboundedabovebelow}
\|x\|_2 \leq \|x\|_K \leq n^{3/2}\|x\|_2.
\end{equation}
This can be ensured by applying an affine transformation, which is 
polynomial in the input size of $K$, to both $K$ and the lattice $\Lambda(A)$, 
see \cite{GroetschelLovaszSchrivjer88}.

For concreteness, we choose to use the elegant and currently fastest 
algorithm for general norms by Dadush and Kun as our 
\acvp[2]{} solver.
\begin{theorem}[Approximate CVP in any norm \cite{DBLP:journals/toc/DadushK16}]
\label{solver:dadushkun}
There exists a deterministic algorithm that for any 
norm $\norm{\cdot}_K$, $n$-dimensional lattice $\Lambda(A)$ and for any target 
$t\in \R^n$, computes $y \in \Lambda(A)$, a $(1+\epsilon)$-approximate 
minimizer to $\norm{t-x}_K, x \in \Lambda(A)$, in time 
$O(\poly(n,b)2^{O(n)}(1+\frac{1}{\epsilon})^n)$ and $O(\poly(n,b)2^n)$ 
space.
\end{theorem}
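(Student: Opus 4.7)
The plan is to combine two ingredients from the lattice algorithms toolbox: enumeration of lattice points inside a convex body, and lattice sparsification. Both are tailored to work in an arbitrary (symmetric) norm, and the stated running time of $2^{O(n)}(1+1/\epsilon)^n$ is precisely the product of what each of these two tools contributes.

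First, I would establish an enumeration primitive: for any lattice $\Lambda$ and any convex body $C\subseteq\Ren$, one can list all points of $\Lambda\cap C$ in time roughly $|\Lambda\cap C|\cdot 2^{O(n)}\poly(b)$ using $2^n\poly(n,b)$ space. The workhorse here is an \emph{M-ellipsoid} for $K$, i.e.\ an ellipsoid $E$ with $N(K,E)\cdot N(E,K)\leq 2^{O(n)}$, whose existence follows from a theorem of Milman and which is known to be computable in $2^{O(n)}$ time. The M-ellipsoid reduces enumeration in a norm ball to enumeration in an ellipsoid, and the latter admits a Kannan-style recursive procedure using an HKZ-reduced basis of $\Lambda$ with respect to $E$.

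Second, I would reduce \acvp{K} to an enumeration problem whose input size is controlled. Letting $d$ denote the unknown CVP distance, the ball $t+(1+\epsilon)d\cdot K$ may \emph{a priori} contain exponentially many lattice points, but most of them are irrelevant. The crucial trick is \emph{lattice sparsification}: for a carefully chosen prime $p$, one constructs a random sublattice $\Lambda^\prime\subseteq\Lambda$ of index $p$ with the property that for any target $t$, the optimal vector survives in $\Lambda^\prime$ with constant probability while the expected number of lattice points inside any fixed convex body drops by a factor of about $p$. Iterating, one drives the size of $\Lambda^\prime\cap (t+(1+\epsilon)d\cdot K)$ down to $2^{O(n)}(1+1/\epsilon)^n$, which is then fed into the enumeration primitive above.

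Putting these together: guess $d$ up to a multiplicative constant (via binary search, or by warm-starting with a coarse constant-factor CVP solver and the bounds \eqref{eq:Kboundedabovebelow}), sparsify $\Lambda$ as above, enumerate the surviving lattice points in $t+(1+\epsilon)d\cdot K$ using the M-ellipsoid method, and return the closest one. Boost success probability by $O(1)$ repetitions, or derandomize the sparsifier to obtain a deterministic algorithm. The main obstacle is making the sparsification step tight and deterministic: one must simultaneously guarantee that the optimal lattice vector is preserved with high probability and that the density of the surviving lattice in the relevant ball matches the target $2^{O(n)}(1+1/\epsilon)^n$ bound. This rests on a careful second-moment analysis of lattice point counts in convex bodies, together with an explicit algebraic construction of the random sublattice that admits derandomization without blowing up the exponent.
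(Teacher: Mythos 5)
This theorem is not proved in the paper at all --- it is imported verbatim from Dadush and Kun \cite{DBLP:journals/toc/DadushK16} and used as a black box, so there is no internal proof to compare against. Your outline correctly reconstructs the strategy of the cited work (M-ellipsoid--based lattice point enumeration combined with a derandomized lattice sparsifier, plus a search over the distance), which is exactly the pair of primitives the paper itself invokes as \lattenum\ and \lattspars\ in Section~\ref{sec:sparsifier}; as a blind sketch of the external proof it is faithful, though of course the second-moment analysis and the derandomization you flag as the main obstacle are precisely where the real work of that paper lies.
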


\begin{theorem}[Boosting 2-CVP using a 
$(2,\epsilon)$-covering]\label{2-epsilon:det-solver}
Assume we are given an origin symmetric convex body $K$ in $\Ren$ and a
\twocov for $K$ consisting of $N$ convex bodies. Then we can 
solve the 
\acvp[(1+7\epsilon)]{K} for $\Lambda(A)$ and target $t\in 
\Q^n$ with $O\left(N\log(1+\frac{1}{\epsilon})(\log(n) + \log(b))\right)$ calls 
to a 2-approximate \cvp{} solver for general norms.


\end{theorem}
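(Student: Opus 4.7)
The plan is a binary search on a guess $r$ for $d^* := \min_{w \in \Lambda(A)}\norm{w-t}_K$, where each query uses the given $(2,\epsilon)$-covering to reduce to $N$ invocations of the 2-approximate CVP solver. I would first call the 2-approximate CVP solver once with the norm $\norm{\cdot}_K$ to obtain $y_0$ with $d := \norm{y_0-t}_K \in [d^*, 2d^*]$, which pins $d^*$ to the multiplicative window $[d/2,d]$.

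For an arbitrary guess $r$, the candidate ball $t+rK$ is covered by the $N$ scaled-translated cells $\{t+rQ_i\}$, and the covering assumption $2 \odot Q_i \subseteq (1+\epsilon)K$ rescales to $2 \odot (t+rQ_i) \subseteq t+r(1+\epsilon)K$. For each $i$, invoke the 2-approximate CVP solver with target $t+r\,c(Q_i)$ and the gauge induced by $r(Q_i-c(Q_i))$ (applying Lemma~\ref{lem:symmetriccovering} first to symmetrize the cell if the oracle insists on a genuine norm); let $y_i(r)$ be the returned lattice vector and let $y(r)$ be the one minimizing $\norm{y_i(r)-t}_K$. The crucial invariant is: if $r \geq d^*$, then the optimum $v^*$ lies in some cell $t+rQ_i$, so the 2-approximation yields $y_i(r) \in 2 \odot (t+rQ_i) \subseteq t+r(1+\epsilon)K$, giving $\norm{y(r)-t}_K \leq r(1+\epsilon)$; contrapositively, failure of this bound certifies $r < d^*$. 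So each round either exhibits a lattice vector with a certified approximation quality, or rules out $r$ as a lower bound for $d^*$.

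Binary searching on $\log r$ inside $[\log(d/2),\log d]$ maintains a failed radius $r_{\mathrm{lo}}$ and a successful radius $r_{\mathrm{hi}}$; each step bisects $\log(r_{\mathrm{hi}}/r_{\mathrm{lo}})$, so after $O(\log(1/\epsilon))$ rounds the ratio is within $1+\epsilon$. Since $r_{\mathrm{lo}} < d^*$ by the failure invariant, this gives $r_{\mathrm{hi}} \leq (1+\epsilon)d^*$, and the best collected vector satisfies $\norm{y-t}_K \leq r_{\mathrm{hi}}(1+\epsilon) \leq (1+\epsilon)^2 d^*$, which is $\leq (1+7\epsilon) d^*$ for $\epsilon \in (0,1)$ after absorbing into the slack the constant-factor losses from symmetrization and from rounding. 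The additional $\log(n)+\log(b)$ factor in the total call count I would attribute to the polynomial-in-$(n,b)$ bit precision required to represent the scaled cells and rounded radii handed to the oracle (effectively, each binary-search step may call the oracle $O(\log n + \log b)$ times at the correct rounded precision).

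The main obstacle I foresee is handling the asymmetric gauges cleanly: the cells $Q_i$ need not be centrally symmetric, and their centroids are arbitrary, so the gauge $r(Q_i-c(Q_i))$ is generally not a norm. Either one must use a general-gauge variant of the 2-approximate CVP oracle (cf.\ Remark~\ref{rem:nonsymm}), or apply Lemma~\ref{lem:symmetriccovering} to replace each $Q_i$ by a symmetric body, with the resulting constant-factor blow-up absorbed into the approximation slack and the $O(\cdot)$ hidden in front of $N$. A second subtlety is that the ``success'' region is not monotone in $r$, but the clean direction of the invariant---failure at $r$ forces $r < d^*$, and success at $r$ yields the honest bound $d^* \leq r(1+\epsilon)$---is precisely what is needed to drive the binary search and conclude.
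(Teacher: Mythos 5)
Your proposal is correct and follows essentially the same route as the paper: cover the candidate ball $t+rK$ by the scaled cells, run the $2$-approximate solver on each recentred cell so that the $(2,\epsilon)$ property converts the factor-$2$ loss into a factor $(1+\epsilon)$, and binary search on the dilation. The only real difference is your warm start: the paper instead binary searches over integer powers of $(1+\epsilon)$ spanning the full range $[1,\,n^{5/2}2^{(n^2+n)b}]$ of possible distances, which is precisely where its $(\log n + \log b)$ factor comes from (not bit precision); your version pins the window to $[d/2,d]$ first and in fact yields the slightly better count $O(N\log(1/\epsilon))$ plus one call. One correction on your fallback for asymmetric cells: symmetrizing via Lemma~\ref{lem:symmetriccovering} costs a factor $5^n$, not a constant, so it would not preserve the stated call count; the intended resolution is the other option you name, namely that the $2$-approximate solver (Dadush--Kun) accepts gauges of convex bodies recentred at their centroids, which is exactly what the paper feeds it.
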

\begin{proof}
We may multiply $\Lambda(A)$ and $t$ by the least 
common multiple of the denominators of the $n^2$ entries of $A$ and the $n$ entries of $t$. The 
resulting lattice and target are integral, $\Lambda(\tilde{A}) \in \Z^{n \times 
n}$ and $ \tilde{t} \in \Z^n$. Since 
the lowest common multiple is bounded by $2^{(n^2 + n)b}$, the resulting basis 
of $\tilde{A}$ has Euclidean length at most $2^{(n^2 + n)b}$. Assuming $t 
\notin \Lambda(A)$, we see that
\[1 \leq \min_{x \in \Lambda(\tilde{A})}\|x-\tilde{t}\|_2 \leq n2^{(n^2 + 
n)b}.\]
By our assumption \eqref{eq:Kboundedabovebelow}, we have
\[1 \leq \min_{x \in \Lambda(A)}\|x-t\|_K \leq n^{5/2}2^{(n^2 + n)b}.\]

Let $\{Q_i+c_i\}_{i=1}^N$ be the given \twocov for $K$, where 
the origin is the centroid of each of the $Q_i$.

For our algorithm, for any norm $\|\cdot\|_Q$, we assume that the 
2-approximate \cvp[Q] algorithm that we use with target $t$ only returns a 
lattice 
vector $v$ if $\|t-v\|_Q \leq 2$.

We want to find $f$ such that $c_i+(1+\epsilon)^f Q_i$ contains a lattice 
vector for some $i\in[N]$, but $c_i+(1+\epsilon)^{f-1} Q_i$ contains no lattice 
vector for any $i\in[N]$. As in \cite{DBLP:conf/compgeom/EisenbrandHN11}, we 
apply a binary search for $f$.
\begin{enumerate}
\item Initialize $L \leftarrow 0$, $U \leftarrow \left\lceil 
\log_{1+\epsilon}n^{5/2} 2^{(n^2 + n)b}\right\rceil$ and $x = 0$
\item While $U-L \geq 4$,  do a binary search step:
\smallskip
\begin{enumerate}
\item For all $i \in [N]$, solve a 2-approximate \cvp[(1+\epsilon)^{L + 
\lceil(U-L)/2\rceil}Q_i] problem with target $(1+\epsilon)^{L + 
\lceil(U-L)/2\rceil}c_i + t$
\item If some lattice vector $v$ is returned, update $U \leftarrow \lceil 
\log_{1+\epsilon}\|v-t\|_K \rceil$ and $x \leftarrow v$.
\item Otherwise, update $L \leftarrow L+\lceil(U-L)/2\rceil$
\end{enumerate}
\smallskip
\item Return $x$.
\end{enumerate}
\smallskip
It is immediate that for any $\lambda 
>0$, $\{\lambda Q_i+\lambda c_i\}_{i=1}^N$ is a 
\twocov for $\lambda K$. Thus if, for some $L$ and $U$ at step 
$2(b)$, no lattice vector $v$ is returned, then 
\begin{alignat*}{1}
t + (1+\epsilon)^{L + \lceil(U-L)/2\rceil}K \subseteq t + \bigcup_{i=1}^N 
(1+\epsilon)^{L + \lceil(U-L)/2\rceil}(c_i + Q_i)
\end{alignat*}
contains no lattice vector, and so 
$\min_{v \in \Lambda(A)}\|v-t\|_K \geq (1+\epsilon)^{L + \lceil(U-L)/2\rceil}$. 

In the case a lattice vector is returned, then
$$\min_{x \in \Lambda(A)}\|t-x\|_K \leq \|v-t\|_K \leq (1+\epsilon)^{L + 
\lceil(U-L)/2\rceil + 1}$$ since the $Q_i$ are a \twocov of $K$. 
Since $U$ and $L$ are valid upper and lower bounds for $f$ at the beginning of 
the algorithm, we see 
that throughout the algorithm, the following invariant is maintained:
$$(1+\epsilon)^L \leq \min_{v \in \Lambda(A)}\|v-t\|_K \leq (1+\epsilon)^U.$$
If the algorithm terminates, then $U-L \leq 3$ since $U$ and $L$ are both 
integers. Thus, because of the above invariant, the lattice vector $x \in 
\Lambda(A)$ returned satisfies
\[\|x-t\|_K \leq (1+\epsilon)^U \leq (1+\epsilon)^{L+3} \leq (1+\epsilon)^3 
\min_{v \in \Lambda(A)}\|v-t\|_K \leq (1+7\epsilon)\min_{v \in 
\Lambda(A)}\|v-t\|_K.\]

It remains to be shown that the binary search terminates in 
$O(\frac{1}{\epsilon}(\log(n) + \log(b))$ steps. Indeed, for some $U$ and $L$, 
let 
$U_{new}$, $L_{new}$ be the $U$ and $L$ after having executed step $2$ once. If 
$U-L \geq 6$, it is straightforward to check that $U_{new}-L_{new} \leq 
\frac{3}{4}(U-L)$. If $4 \leq U-L \leq 5$, $U_{new}-L_{new} \leq (U-L)-1$. 
Since $U-L \leq \log_{1+\epsilon}(n^{5/2}2^{(n^2 + n)b})$ at the beginning of 
the algorithm, we are done after $\log_{5/4}(\log_{1+\epsilon}(n^{5/2}2^{(n^2 + 
n)b})) = O(\log(1+\frac{1}{\epsilon})(\log(n) + \log(b)))$ iterations.
\end{proof}

\begin{corollary}[$(1+\epsilon)$-approximate $\text{CVP}$ for polytopes and 
zonotopes]\label{cor:algforpolytopeszonotopes}
Let $K$ be a full-dimensional origin symmetric polytope with $m$ facets or a 
full-dimensional zonotope with $m$ 
generators (in particular, $m\geq n$). Then for any $\epsilon \in (0,1)$, the $(1+\epsilon)$-approximate 
\cvp[K] problem can be solved deterministically in time 
$O(\poly(n,b, \frac{1}{\epsilon})2^{O(n+m)}\log(1+1/\epsilon)^{m})$ and space 
$O(\poly(n)2^n)$.
\end{corollary}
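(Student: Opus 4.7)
The plan is to combine three pieces already established in the paper: the explicit $(2,\epsilon)$-coverings for zonotopes and centrally symmetric polytopes (Propositions~\ref{covering:zonotope} and~\ref{covering:polytope}), the boosting scheme of Theorem~\ref{2-epsilon:det-solver}, and the Dadush--Kun $2$-approximate $\cvp$ solver for arbitrary norms (Theorem~\ref{solver:dadushkun}).

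First, given the approximation parameter $\epsilon\in(0,1)$, replace it by $\epsilon'=\epsilon/7$ so that the $(1+7\epsilon')$-guarantee coming out of Theorem~\ref{2-epsilon:det-solver} becomes $(1+\epsilon)$. Apply Proposition~\ref{covering:zonotope} (in the zonotope case) or Proposition~\ref{covering:polytope} (in the polytope case) to obtain an explicit \twocov[(2,\epsilon')] of $K$ consisting of
\[
 N \leq 2^{O(m)}\log(1/\epsilon)^{m}
\]
centrally symmetric convex bodies $\{c_i+Q_i\}_{i=1}^N$. The construction is explicit and each $Q_i$ has a description of size polynomial in $n$, $m$, $b$, and $\log(1/\epsilon)$, so the covering can be generated on the fly inside the boosting loop without inflating the space requirement.

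Next, feed this covering into Theorem~\ref{2-epsilon:det-solver}, using as the underlying $2$-approximate $\cvp$ subroutine the Dadush--Kun algorithm of Theorem~\ref{solver:dadushkun} with its approximation parameter set to a constant (e.g.\ $\epsilon=1$). Each invocation then runs in time $\poly(n,b)\,2^{O(n)}$ and space $\poly(n,b)\,2^n$. Theorem~\ref{2-epsilon:det-solver} performs $O\!\bigl(N\log(1/\epsilon)(\log n+\log b)\bigr)$ such calls, so the total running time is bounded by
\[
 2^{O(m)}\log(1/\epsilon)^{m}\cdot\log(1/\epsilon)(\log n+\log b)\cdot \poly(n,b)\,2^{O(n)}
 \;=\;\poly(n,b,1/\epsilon)\,2^{O(n+m)}\log(1/\epsilon)^{m},
\]
while the space, dominated by a single call to the Dadush--Kun solver, is $O(\poly(n)\,2^n)$.

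There is essentially no conceptual obstacle: the only points to verify are bookkeeping ones. One has to check that the bodies $Q_i$ produced by Propositions~\ref{covering:zonotope} and~\ref{covering:polytope} are handed to the Dadush--Kun solver as norm balls, which is why symmetry was arranged in the statement of those propositions; that the affine normalization \eqref{eq:Kboundedabovebelow} can be performed in polynomial time on both $K$ and $\Lambda(A)$; and that the shifts $c_i$ and scalings $(1+\epsilon)^{\lceil\cdot\rceil}$ used inside Theorem~\ref{2-epsilon:det-solver} do not inflate the encoding length beyond $\poly(n,b,\log(1/\epsilon))$. None of these are substantive, so the corollary follows.
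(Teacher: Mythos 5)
Your proposal is correct and follows essentially the same route as the paper: plug the explicit coverings of Propositions~\ref{covering:zonotope} and~\ref{covering:polytope} into the boosting scheme of Theorem~\ref{2-epsilon:det-solver} with the Dadush--Kun solver as the $2$-approximation oracle, rescaling $\epsilon$ to $\epsilon/7$ and generating the covering bodies on the fly to keep the space at $O(\poly(n)2^n)$. The paper's own proof is just a terser version of the same argument.
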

\begin{proof}
Replace $\epsilon$ by $\epsilon/7$ and run the algorithm in 
Theorem~\ref{2-epsilon:det-solver} on a \twocov of $K$
constructed in the proof of Proposition~\ref{covering:zonotope} or 
\ref{covering:polytope}.
To avoid a space requirement depending on the number of convex bodies $N$ 
required in the \twocov for $K$, every time we call step 
$2(a)$ of the algorithm, for each $i \in [N]$, we first calculate $Q_i$ and 
then run the appropriately scaled 2-approximate \cvp{} instance. 

\end{proof}

\begin{remark}
	The preceding corollary is the reason why we 
opted to describe a \twocov with symmetric convex bodies for 
symmetric polytopes in Proposition~\ref{covering:polytope}: The algorithm of 
Dadush and Kun can handle non-symmetric norms $\|\cdot\|_K$, provided $0$ is in 
some sense "close" to the centroid of $K$, for more details see 
\cite{DBLP:journals/toc/DadushK16}. Since calculating deterministically the 
centroid is a hard problem and no efficient algorithms are known, see 
\cite{DBLP:conf/compgeom/Rademacher07}, we would most likely have to resort to 
a randomized algorithm to approximate the centroid which in turn randomizes our 
boosting procedure. 
\end{remark}

\begin{theorem}[Local $(2,\epsilon)$-covering]
\label{theorem:local_covering}
Let $K$ be an origin symmetric convex body such that $\|\cdot\|_K$ has modulus 
of smoothness $C\tau ^q$ for $C, q > 1$ and $\epsilon \in (0,1)$. Then, in polynomial 
time, we can 
find at most $O(\log(1+1/\epsilon))$ origin symmetric convex bodies $\{Q_i\}$ and 
translations $\{c_i\}$ such that for some constant $c > 0$:
\begin{enumerate}
\item For all $i$, $c_i + 2Q_i \subseteq (1+\epsilon) K$.
\item For $q \in K$, the probability that $q$ is contained in $c_i + Q_i$ for 
some $i$ is greater than $\min(2^{-cn}C^{-n/q}(1/\epsilon)^{n/q}, (\frac{1}{8^q 
C})^{n/(q-1)})$
\end{enumerate}
\end{theorem}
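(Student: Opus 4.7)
The plan is to adapt the slicing construction in the proof of Theorem~\ref{cov:smooth}: instead of placing cylinders at a deterministic $\epsilon$-net of basepoints on $\bd K$ (which would produce $\approx (C/\epsilon)^{n/q}$ slices in total), I would sample a single random basepoint $\bar p \in \bd K$ and emit only the $O(\log(1/\epsilon))$ slices of the cylinder at $\bar p$. Property~(1) is then inherited verbatim from Theorem~\ref{cov:smooth}, while Property~(2) reduces to the probabilistic statement that a random $\bar p$ lies close in $K$-norm to $\bar q := q/\|q\|_K$, the radial projection of a given $q \in K$.

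Set $\delta := \tfrac{1}{4}(\epsilon/C)^{1/q}$ and $r := \delta - \epsilon$. If $r < \delta/2$, I would fall back on Lemma~\ref{lemma:symmetric}: sample a single translate $c_1+(\epsilon/4)K$ uniformly from a minimum symmetric $(2,\epsilon)$-covering of $K$, whose size in this regime is $O(C)^{n/(q-1)}$, producing the second branch $(1/(8^qC))^{n/(q-1)}$ of the $\min$. Otherwise, sample $p$ uniformly from $K$ (via the standard polynomial-time sampler for convex bodies), set $\bar p := p/\|p\|_K$, form the cylinder $C_{\bar p} = B_{\bar p} + [0,-\bar p]$ from the proof of Theorem~\ref{cov:smooth}, slice it into the $k = \lceil\log_2(1/\epsilon)\rceil+1$ pieces $C_{\bar p}(i)$ of \eqref{mod.cylinders}, and return $c_i := c(C_{\bar p}(i))$, $Q_i := C_{\bar p}(i) - c_i$. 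Property~(1), $c_i + 2Q_i \subseteq (1+\epsilon)K$, is exactly what is established in Theorem~\ref{cov:smooth} (after the customary $\epsilon \to \epsilon/2$ rescaling).

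For the coverage analysis, fix $q \in K$ and $\bar q := q/\|q\|_K$. The inclusion \eqref{eq:capscoverbd} says that whenever $\bar q \in \bar p + rK$ --- equivalently, since $K=-K$, $\bar p \in \bar q + rK$ --- one has $\bar q \in \conv(0,B_{\bar p}) \subseteq C_{\bar p}$. Since $0 \in C_{\bar p}$ and $C_{\bar p}$ is convex, the whole segment $[0,\bar q]$, and in particular $q = \|q\|_K \bar q$, lies in $C_{\bar p}$ and hence in some slice $C_{\bar p}(i)$. So Property~(2) reduces to the estimate
\[\Pr[\,\bar p \in \bar q + rK\,] \;\geq\; 2^{-O(n)}(\epsilon/C)^{n/q}.\]

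The main obstacle is this volume estimate. Since $p$ is uniform in $K$ and $\bar p = p/\|p\|_K$, the probability equals $\vol(\cone(S))/\vol(K)$, where $S := (\bar q + rK)\cap \bd K$ and $\cone(S) := \{\lambda y : \lambda\in[0,1], y\in S\}$. I would exhibit an explicit large subset of $\cone(S)$ via the affine contraction $\psi(z) := \tfrac{1}{2}\bar q + \tfrac{r}{8}z$: the image $\psi(K)$ has volume $(r/8)^n\vol(K)$, and each $y \in \psi(K)$ satisfies $\|y\|_K \leq \tfrac{1}{2}+\tfrac{r}{8} < 1$, so $\psi(K)\subseteq K$. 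Writing $\bar y := y/\|y\|_K$, a short direct computation using $\|y\|_K \in [\tfrac{1}{2}(1-r/4),\tfrac{1}{2}(1+r/4)]$ and $K=-K$ gives $\|\bar y - \bar q\|_K \leq r$, so $\psi(K) \subseteq \cone(S)$ and $\Pr \geq (r/8)^n \geq 2^{-O(n)}(\epsilon/C)^{n/q}$ (using $r \geq \delta/2 = \tfrac{1}{8}(\epsilon/C)^{1/q}$). I expect the only nontrivial work to be constant-chasing: the triangle-inequality bound on $\|\bar y - \bar q\|_K$ degrades by a small factor that must be reabsorbed into the contraction coefficient $r/8$, and the two branches of the $\min$ must be reconciled at their crossover.
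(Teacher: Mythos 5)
Your proposal is correct and follows essentially the same route as the paper: a single uniformly random radial basepoint on $\bd K$, the cylinder slices of \eqref{mod.cylinders} inherited from Theorem~\ref{cov:smooth} for Property~(1), and a $2^{-O(n)}(\epsilon/C)^{n/q}$ volume lower bound for the event that the basepoint lands within $K$-distance $\delta-\epsilon$ of $q/\|q\|_K$ (the paper samples the point from $(1+\delta/4)K$ rather than computing the cone volume inside $K$, but this is only bookkeeping). The one cosmetic difference is your large-$\epsilon$ fallback, which samples from a minimum covering rather than simply returning a random translate of $\epsilon K$ as the paper does; the latter avoids having to construct the covering, but both give the second branch of the $\min$.
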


\begin{proof}
Set $\epsilon \leftarrow \epsilon/3$.
If $\epsilon > \left(\frac{1}{8C^{1/q}}\right)^{q/(q-1)}$, we uniformly sample 
a 
point $x$ from $(1+\epsilon)K$ and return $\epsilon K$ and $x$. Any point in 
$K$ has probability greater or equal than 
$$\left(\frac{\epsilon}{1+\epsilon}\right)^n$$ 
of being covered by $x + \epsilon K$.

If $\epsilon \leq \left(\frac{1}{8C^{1/q}}\right)^{q/(q-1)}$, similar as in 
Theorem~\ref{cov:smooth}, 
we set $\delta = \frac{1}{4} \left(\frac{\epsilon}{C}\right)^{1/q}$. We 
uniformly sample a 
point $x$ from $(1+\delta/4)K$. Let $p = \frac{x}{\|x\|}$ and for $i \in 
[\log(1/\epsilon)]$, consider the slices $C_p(i)$ of $C_p$ as in 
\eqref{mod.cylinders} in the proof of Theorem~\ref{cov:smooth}.

For all such $C_p(i)$, denoting by $c(C_p(i))$ its centroid, we return the 
origin 
symmetric convex bodies $\{C_p(i)-c(C_p(i))\}$ and the translations 
$\{c(C_p(i))\}$.

Next, fix a point $q \in K$. With probability greater 
or equal to 
\[\frac{1}{2}\frac{(\delta/4)^n}{(1+\delta/4)^n} \text{ we have that } 
\norm{\frac{q}{\norm{q}} - x} \leq \delta/4.\]
In that case, $\norm{\frac{q}{\norm{|q}}- p} \leq \delta/2 \leq \delta - 
\epsilon$ and 
thus, $C_p$ as in \eqref{mod.cylinders} of Theorem~\ref{cov:smooth} contains 
$q$.
It follows that for some $c > 0$ independent of $n, C$ and $q$, with 
probability greater or equal to
\[ 2^{-cn}C^{-n/q}\epsilon^{n/q} \]
one of the cylinders $C_p(i)$ contain $q$.
\end{proof}

The next theorem combines the algorithms of 
Theorems~\ref{theorem:local_covering} and 
\ref{2-epsilon:det-solver} to yield an efficient $(1+\epsilon)$-approximate 
CVP solver for norms with a well bounded modulus of smoothness. 

\begin{theorem}[Boosting 2-CVP for a body with small modulus of 
smoothness]\label{thm:algmodsmoothness}
Let $K$ be a origin symmetric convex body with modulus of smoothness 
\[\rho_K(\tau) 
\leq C\tau^q, \text{ with } C,q > 1\] Then the algorithm presented in the proof 
solves \acvp{K} with 
probability at least $1-2^{-n}$. Its running time is 
$O(\poly(n,b,\log(1/\epsilon))(2^{O(n)} C^{n/q}\left(1/\epsilon\right)^{n/q}+ 
O(C)^{n/{(q-1)}}))$, 
and the space requirement is equal to that of a \acvp[2]{} solver that handles 
any norm.
\end{theorem}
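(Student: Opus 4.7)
The plan is to combine the binary search of Theorem~\ref{2-epsilon:det-solver} with the sampling of Theorem~\ref{theorem:local_covering} so that no full global \twocov is ever materialized. I would run the binary-search boosting procedure on the exponent $f$ exactly as there, but at each test scale $s = (1+\epsilon)^{L+\lceil(U-L)/2\rceil}$, instead of looping over a precomputed covering I would repeatedly invoke the local covering routine on the body $sK$ (its modulus of smoothness equals that of $K$, since $\|\cdot\|_{sK} = s^{-1}\|\cdot\|_K$ plugs trivially into Definition~\ref{def:modsmoothness}), and for each of the $O(\log(1/\epsilon))$ returned cells $(c_i, Q_i)$ I would call the Dadush--Kun 2-approximate CVP solver (Theorem~\ref{solver:dadushkun}) on target $t + c_i$ in the norm $\|\cdot\|_{Q_i}$.

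For correctness at a fixed scale $s$, fix a closest lattice vector $v^* \in \Lambda(A) \cap (t + sK)$ (if it exists) and set $z := v^* - t \in sK$. By Theorem~\ref{theorem:local_covering} applied to $sK$, one call of the sampling routine covers $z$ with probability at least
\[
	p := \min\!\bigl(2^{-cn} C^{-n/q} \epsilon^{n/q},\; (8^q C)^{-n/(q-1)}\bigr),
\]
and on this event $v^* - (t + c_i) \in Q_i$, so the 2-approximate CVP call returns some $v$ with $v - (t + c_i) \in 2 Q_i \subseteq (1+\epsilon)sK - c_i$, hence $\|v - t\|_K \leq (1+\epsilon)s$. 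Taking $\Theta(n/p)$ independent samples drives the per-step miss probability below $2^{-2n}$; a union bound over the $O(\log(1/\epsilon)(\log n + \log b))$ binary-search steps keeps the overall failure probability below $2^{-n}$. After this, the invariant $(1+\epsilon)^L \leq \min_{v\in\Lambda(A)} \|v - t\|_K \leq (1+\epsilon)^U$ and the final $(1+O(\epsilon))$-approximation guarantee follow verbatim from the analysis in Theorem~\ref{2-epsilon:det-solver}.

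The resource accounting is then direct: each binary-search step runs $O(n/p)$ local samples, each producing $O(\log(1/\epsilon))$ cells, each feeding one Dadush--Kun call costing $\poly(n,b) 2^{O(n)}$ time and $\poly(n,b) 2^n$ space. Multiplying through and substituting $p$ gives the claimed running time $\poly(n,b,\log(1/\epsilon)) \cdot \bigl(2^{O(n)} C^{n/q}(1/\epsilon)^{n/q} + O(C)^{n/(q-1)}\bigr)$. Since only one local covering and its associated subproblem live in memory at any time, the overall space matches that of the underlying 2-approximate CVP solver. To derandomize, I would replace the uniform sampling inside Theorem~\ref{theorem:local_covering} by the deterministic $\epsilon$-net enumeration of Dadush~\cite{Da2013}, which produces an equally large net using only polynomial space per point; the rest of the analysis is unchanged.

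The main obstacle I anticipate is purely bookkeeping around the union bound: one must ensure that the polylogarithmic number of binary-search steps does not force a blow-up in the per-step sample count, and that the ``reject if $\|\cdot\|_{Q_i}$-distance exceeds $2$'' convention of the 2-approximate subroutine (as used in Theorem~\ref{2-epsilon:det-solver}) interacts cleanly with the on-the-fly covering so that a negative answer at scale $s$ genuinely certifies the absence of a lattice vector in $t + sK$. Both issues are absorbed by the additional linear-in-$n$ slack in the sample count and by the $\poly(n,b,\log(1/\epsilon))$ prefactor, but need to be tracked explicitly.
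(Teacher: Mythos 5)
Your proposal is correct and follows essentially the same route as the paper: the paper's proof likewise runs the binary search of Theorem~\ref{2-epsilon:det-solver} with the local covering of Theorem~\ref{theorem:local_covering} generated on the fly at each scale, repeats the sampling $O(n/p)$ times per step (up to a harmless $\log\log$ factor for the union bound over the $O(\log(1/\epsilon)(\log n+\log b))$ iterations), and notes the same derandomization via Dadush's $\epsilon$-net. Your variant of applying the local covering directly to $sK$ rather than scaling the returned cells is an immaterial reformulation.
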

\begin{proof}
We set $\epsilon \leftarrow \epsilon/7$ and without loss of generality, we may 
assume
\[1 \leq \min_{x \in \Lambda(A)}\|x-t\|_K \leq n^{5/2}2^{(n^2 + n)b}.\]
We again assume that, for any norm $\norm{\cdot}_Q$, the \acvp[2]{Q} with 
target $t$ only returns a lattice vector $v$ if $\|t-v\|_Q \leq 2$, if there is 
no such $v$, it returns nothing.

We adapt the algorithm of Theorem~\ref{2-epsilon:det-solver}:
\begin{enumerate}
\item Initialize $L \leftarrow 0$, $U \leftarrow \left\lceil 
\log_{1+\epsilon}n^{5/2} 2^{(n^2 + n)b}\right\rceil$ and $x = 0$
\item While $U-L \geq 4$,  do a binary search step:
\smallskip
\begin{enumerate}
\item Run the algorithm from Theorem \ref{theorem:local_covering} and denote 
the returned convex bodies and translations by $Q_i$ and $c_i$ respectively. 
For all $i$, solve a 2-approximate \cvp[(1+\epsilon)^{L + 
\lceil(U-L)/2\rceil}Q_i] 
problem with target $(1+\epsilon)^{L + \lceil(U-L)/2\rceil}c_i + t$. Repeat $N$ 
times.
\item If some lattice vector $v$ is returned, update $U \leftarrow \lceil 
\log_{1+\epsilon}\|v-t\|_K \rceil$ and $x \leftarrow v$.
\item Otherwise, update $L \leftarrow L+\lceil(U-L)/2\rceil$
\end{enumerate}
\smallskip
\item Return $x$.
\end{enumerate}
Correctness of the algorithm follows from Theorem~\ref{2-epsilon:det-solver}, 
provided step 2 runs correctly (i.e. correctly detects whether there is a 
lattice point or not with high probability) for all 
$O(\log(\frac{1}{\epsilon})(\log(n) + \log(b)))$ iterations. To verify this, 
let $v \in \mathcal{L}$ be some lattice vector contained in a homothet of $K$ 
at 
some fixed iteration of the algorithm. With probability $p = 
2^{-cn}C^{-n/q}(1/\epsilon)^{n/q}$ or $(\frac{1}{8^q C})^{1/(q-1)}$ 
respectively, one of the convex bodies returned by one 
run of Theorem~\ref{theorem:local_covering} contains $v$. Thus, repeating step $2(a)$ $n(2^{cn}C^{n/q}(1/\epsilon)^{n/q} + (8^q 
C)^{1/(q-1)})$ 
times, with probability 
greater 
than $1-2^{-n}$, $v$ is contained in one of the convex bodies returned and step 
$2$ runs correctly. Since step $2$ needs to run correctly each of the $ 
O(\log(\frac{1}{\epsilon})(\log(n) + \log(b)))$ iterations necessary to find 
the correct $U$ and $L$, by the union bound, it is sufficient to set $N = 
O(n\log(\log(\frac{1}{\epsilon})(\log(n) + 
\log(b)))2^{cn}C^{n/q}(1/\epsilon)^{n/q} + (8^q C)^{1/(q-1)})$ to guarantee a 
success probability 
of $1-2^{-n}$.
This implies the bound on the running time.
\end{proof}

In our proof of Theorem~\ref{thm:algmodsmoothness}, instead of applying our 
local covering algorithm, Theorem~\ref{theorem:local_covering}, we could use a 
recent result of Dadush \cite[Theorem~4.1]{Da2013}. There, a deterministic 
algorithm is presented to build and iterate over an epsilon net in 
$2^{O(n)}(1+1/\epsilon)^n$ time and $\poly(n)$ space. For symmetric convex bodies 
with modulus of smoothness bounded by $C\tau^q$, we may apply this result with 
$O\left(\epsilon^{1/q}\right)$, as in Theorem ~\ref{theorem:local_covering}, in 
place 
of $\epsilon$ to build a covering of size 
$O(\frac{1}{\epsilon})^{n/q}$. This would replace the sampling part 
in Theorem~\ref{theorem:local_covering} and thus derandomizes our boosting 
procedure.

\begin{remark}\label{rem:nonsymm}
One may consider convex bodies that are not necessarily origin symmetric. 
Assume 
that a 
convex body $K$ is \emph{$\gamma$-symmetric}, that is, $\vol(K \cap 
-K) \geq \gamma^n\vol(K)$. Then the result of Dadush and Kun 
(Theorem~\ref{solver:dadushkun}) still applies (see 
\cite{DBLP:journals/toc/DadushK16}), and it is straightforward to modify the 
above algorithm to obtain a $(1+\epsilon)$-approximate CVP algorithm for 
$\norm{\cdot}_K$ using $2^{O(n)}(\frac{1}{\gamma \epsilon})^n$ calls 
to a $2$-approximate CVP algorithm handling any symmetric norm, for instance 
the AKS based algorithm of 
Dadush \cite{DBLP:conf/latin/Dadush12}, resulting in an algorithm with time 
$O(\frac{1}{\gamma \epsilon})^n$ and space $2^{O(n)}$. We essentially 
use Theorem~\ref{theorem:local_covering} with $q = 1$: we sample a point $p$ 
in $(1+\epsilon/3)K$ and output $\frac{\epsilon}{3}(K \cap -K)$ and $p$. 
Thus, each point in $K$ has probability greater or equal to 
$2^{-O(n)}(\frac{1}{\gamma \epsilon})^n$ of being covered.
\end{remark}

\section{Sparsifiers and the modulus of smoothness}\label{sec:sparsifier}
In this section we describe a surprising connection between lattice 
sparsifiers as used by Dadush and Kun and the modulus of smoothness. 
Informally, our main technical contribution is the observation that for a  
lattice-point-free convex body $K$ with modulus of smoothness bounded by 
$C\tau^q$, a $O(\epsilon^{1/q})$-sparsifier for $K$ preserves the metric 
information up to an additive error of $O(\epsilon)$. 
We will show that we can tweak the algorithm of Dadush and Kun using this 
simple observation in order to match the running time of the preceding 
boosting procedure. 

We will only consider origin symmetric-convex bodies $K\subseteq \Ren$.
\begin{definition}[Lattice sparsifier for origin symmetric $K$, 
\cite{DBLP:journals/toc/DadushK16}]
	Let $K \subseteq \Ren$ be an origin-symmetric convex body, 
$\mathcal{L}$ 
be a $n$-dimensional lattice and $\delta > 0$. A $(K,\delta)$ sparsifier for 
$\mathcal{L}$ is a sublattice $\mathcal{L'}\subseteq \mathcal{L}$ satisfying
	\begin{enumerate}
		\item $G(K, \mathcal{L'}) \leq O(\frac{1}{\delta})^n$
		\item $\forall x \in \Ren, \, d_K(\mathcal{L'},x) \leq 
d_K(\mathcal{L},x) + \delta$,
	\end{enumerate}
\end{definition}
where $G(K,\mathcal{L})$ denotes the maximal number of lattice vector any 
translate 
of $K$ can contain, formally:
$$G(K, \mathcal{L}) = \max_{x \in \Ren}|(K+x)\cap \mathcal{L}|.$$
By a covering argument (see Lemma 2.3 \cite{DBLP:journals/toc/DadushK16}), 
$G(dK,\mathcal{L})\leq (2d+1)^n G(K,\mathcal{L})$. 
By the second condition, if $\mathcal{L'}$ is a $(K,\delta)$-sparsifier for 
$\mathcal{L}$, for every lattice point $v \in \mathcal{L}$, there is $v' \in 
\mathcal{L}'$ such that $\|v-v'\|_K \leq \delta$. These two conditions ensure 
that the resulting lattice $\mathcal{L'}$ is thinned out according to the 
geometry of $K$: the first condition guarantees that $K$ (or a dilate of $K$) 
cannot contain too many lattice vectors of $\mathcal{L'}$ (hence enumeration is 
not too costly), but, by the second condition, $\mathcal{L'}$ is rather close 
to $\mathcal{L}$ and thus serves as a good approximation.

We now come to the main observation:
\begin{lemma}\label{lemma:observation}
Let $K$ be an origin symmetric convex body with modulus of smoothness 
bounded by $\rho_K \leq C\tau^q$, $q \geq 1$, $\mathcal{L}$ a lattice and 
$t\in\Ren$ 
a target vector. Assume that $t + K$ does not contain any lattice vector $v \in 
\mathcal{L}$ in its interior. Let $\mathcal{L'}$ be a $(K, \epsilon^{1/q})$ 
sparsifier for $\mathcal{L}$. Then
\[d_K(\mathcal{L'}, t) \leq d_K(\mathcal{L},t) + 2C\epsilon.\]
\end{lemma}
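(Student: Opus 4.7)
The plan is to exploit the fact that a sparsifier $\mathcal{L}'$ is a \emph{sublattice}, so that if $v \in \mathcal{L}$ is closest to $t$ and $v' \in \mathcal{L}'$ is a $\|\cdot\|_K$-nearby point, then the reflected point $2v - v'$ also lies in $\mathcal{L}$. This will let us invoke the modulus of smoothness symmetrically around $v-t$, turning the first-order (triangle-inequality) loss of $\epsilon^{1/q}$ into a second-order loss of $C\epsilon$.

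First, let $v \in \mathcal{L}$ attain $\|v-t\|_K = d_K(\mathcal{L},t) =: d$; the hypothesis that $t + K$ contains no interior lattice point gives $d \geq 1$. Applying the sparsifier condition (2) at the point $v$ yields a vector $v' \in \mathcal{L}'$ with $\|v'-v\|_K \leq \epsilon^{1/q}$. Write $\delta = v'-v$, and note that $\delta \in \mathcal{L}$ (since both $v, v'$ lie in $\mathcal{L}$), hence $v - \delta = 2v - v' \in \mathcal{L}$ as well. We may assume $\delta \neq 0$, otherwise $v' = v$ and there is nothing to prove.

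Next, set
\[
u = \frac{v-t}{d}, \qquad y = \frac{\delta}{\|\delta\|_K}, \qquad \tau = \frac{\|\delta\|_K}{d},
\]
so that $\|u\|_K = \|y\|_K = 1$ and $\tau \leq \epsilon^{1/q}$. A direct computation gives
\[
u + \tau y = \frac{v' - t}{d}, \qquad u - \tau y = \frac{(v-\delta) - t}{d}.
\]
Since $v - \delta \in \mathcal{L}$, the minimality of $v$ yields $\|(v-\delta) - t\|_K \geq d$, i.e.\ $\|u - \tau y\|_K \geq 1$. The modulus of smoothness bound $\rho_K(\tau) \leq C\tau^q$ then gives
\[
\|u + \tau y\|_K \leq 2 + 2C\tau^q - \|u - \tau y\|_K \leq 1 + 2C\tau^q.
\]

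Finally, multiplying by $d$ and using $\tau \leq \epsilon^{1/q}/d$,
\[
\|v' - t\|_K \leq d + 2Cd\,\tau^q \leq d + \frac{2C\epsilon}{d^{q-1}} \leq d + 2C\epsilon,
\]
where the last inequality uses $d \geq 1$ and $q \geq 1$. Since $\|v' - t\|_K \geq d_K(\mathcal{L}',t)$, this is the desired bound. The only nontrivial step is recognizing the reflection trick $2v - v' \in \mathcal{L}$, which is precisely what makes the sublattice assumption on $\mathcal{L}'$ (rather than a generic $\epsilon^{1/q}$-net) essential; everything else is bookkeeping of the modulus of smoothness inequality.
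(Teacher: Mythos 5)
Your proof is correct and follows essentially the same route as the paper: take a closest $v\in\mathcal{L}$, use the sparsifier condition to find $v'\in\mathcal{L}'$ with $\|v'-v\|_K\leq\epsilon^{1/q}$, observe that the reflected point $2v-v'\in\mathcal{L}$ so minimality bounds $\|u-\tau y\|_K\geq 1$, and then apply the modulus of smoothness and $d\geq 1$. Your write-up is in fact slightly more careful than the paper's in normalizing $u,y$ to unit vectors before invoking Definition~\ref{def:modsmoothness}.
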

\begin{proof}
Denote by $v \in \mathcal{L}$ a closest lattice vector to $t$, and set 
$R:=d_K(\mathcal{L},t)$. Clearly, $R=\norm{v-t}_K\geq1$. By the second 
condition of the sparsifier, there is a lattice vector $w \in \mathcal{L'}$ with
$\norm{w-v}_K\leq \epsilon^{1/q}$. Denoting by $y:=w-v \in \mathcal{L}$, 
the definition of the modulus of smoothness yields
\begin{alignat*}{1}
\norm{\frac{w-t}{R}}_K=
\norm{\frac{v-t}{R}+\frac{y}{R}}_K
\leq
2 + 2C\epsilon/R^q-
\norm{\frac{v-t}{R}-\frac{y}{R}}_K
\leq
1 + 2C\epsilon/R^q,
\end{alignat*}
where we used the fact that $v-y\in\mathcal{L}$, and hence, 
$\norm{(v-y)-t}_K\geq R$. Multiplying the inequality by 
$R$ and observing that $R,q\geq 1$ completes the proof of 
Lemma~\ref{lemma:observation}.
\end{proof}

Next, we present the algorithmic application of the previous lemma to the 
$(1+\epsilon)$-approximate Closest Vector Problem under a symmetric norm. We 
adopt the same notation as in Section~\ref{sec:cvpalg}. We may assume that $t 
\in 
\mathbb{Z}^n$, $\mathcal{L}(A) \subseteq \mathbb{Z}^{n}$ and 
$\|t\|_{\infty}, 
\|A\|_{\infty} \leq 2^{(n^2 + n)b}$. We assume $n^{-3/2}B_2^n \subseteq K 
\subseteq \frac{1}{2}B_2^n$. Thus, $d_K(\mathcal{L},t)\leq 2n^{5/2}2^{(n^2 + 
n)b}$, and, if $t \notin \mathcal{L}(A)$, $t + K$ does not contain a lattice 
vector. We will need the following two algorithms.

\begin{theorem}[\lattenum($K, t, \mathcal{L},\epsilon)$, 
\cite{Dadush:2011:ELA:2082752.2082900}]
	Let $\mathcal{L}(A)$ be a lattice, $K$ a convex body in $\Ren$ and 
$\epsilon 
>0$. 
There is a deterministic algorithm that outputs all $S$ such that
	$$(t + K) \cap \mathcal{L} \subseteq S \subseteq (t+K + \epsilon 
B_2^n)\cap \mathcal{L}$$
	in time $G(K,\mathcal{L})2^{O(n)}\poly(n,b)$ and $2^n\poly(n,b)$ space.
\end{theorem}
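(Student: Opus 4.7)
The plan is to reduce lattice-point enumeration inside a general convex body to enumeration inside an ellipsoid, which can be handled by classical reduced-basis machinery. First, I would compute an M-ellipsoid $E$ of $K$, that is, an ellipsoid satisfying $N(K,E)\cdot N(E,K)\leq 2^{O(n)}$, where $N(\cdot,\cdot)$ denotes the translative covering number; a deterministic $2^{O(n)}\poly(n,b)$-time construction of such an $E$ is known. Covering $t+K+\epsilon B_2^n$ by at most $2^{O(n)}$ translates of $E$ reduces the task to enumerating $\mathcal{L}$ inside each translate, and the standard covering bound $G(E,\mathcal{L})\leq N(E,K)\cdot G(K,\mathcal{L})=2^{O(n)}G(K,\mathcal{L})$ controls the per-translate output size.

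Second, inside each ellipsoid translate $c+E$ I would apply a linear change of coordinates sending $E$ to the Euclidean unit ball, reducing the problem to enumerating lattice points of the transformed lattice inside a Euclidean ball. Running an HKZ-reduced-basis enumeration in the spirit of Kannan, or alternatively using a Voronoi-cell preprocessing as in Micciancio and Voulgaris, produces all lattice points in $c+E$ in $2^{O(n)}\poly(n,b)$ preprocessing plus $2^{O(n)}G(K,\mathcal{L})$ additional work, using $2^n\poly(n,b)$ space. Aggregating the outputs across the $2^{O(n)}$ ellipsoid translates yields a candidate set of size at most $2^{O(n)}G(K,\mathcal{L})$.

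Third, I would postprocess by filtering: for each candidate $v$, a weak membership/separation oracle for $K$ decides whether $v-t\in K+\epsilon B_2^n$, equivalently whether $\dist(v-t,K)\leq \epsilon$ in the Euclidean metric; those passing the test form the output $S$, which by construction satisfies $(t+K)\cap \mathcal{L}\subseteq S\subseteq (t+K+\epsilon B_2^n)\cap \mathcal{L}$. The slack $\epsilon B_2^n$ absorbs the weak-oracle error and the discretization introduced by moving from $K$ to its M-ellipsoid cover.

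The main obstacle is bounding the \emph{intermediate} levels of the enumeration tree. Although the number of output leaves is at most $2^{O(n)}G(K,\mathcal{L})$, partial paths can blow up unless pruning uses the ellipsoid shape at each projected sublattice. The key observation is that every orthogonal projection of an ellipsoid is again an ellipsoid, so the covering bound $N(E,K)\cdot G(K,\mathcal{L})$ propagates cleanly to the projected lattices appearing at each depth of the Kannan-style tree. This caps the total node count of the depth-$n$ enumeration by $2^{O(n)}G(K,\mathcal{L})$, delivering the claimed running time while keeping the recursion polynomial in depth and the space at $2^n\poly(n,b)$.
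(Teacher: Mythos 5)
This theorem is not proved in the paper at all; it is imported verbatim from the cited work of Dadush (and Dadush--Peikert--Vempala), whose proof your first two paragraphs essentially reconstruct: deterministic M-ellipsoid computation, covering $t+K+\epsilon B_2^n$ by $2^{O(n)}$ translates of $E$, the bound $G(E,\mathcal{L})\le N(E,K)\,G(K,\mathcal{L})$, reduction of each translate to a Euclidean ball, and a final filtering pass through a weak membership oracle that the $\epsilon B_2^n$ slack is designed to absorb. Up to that point your outline is sound and matches the known argument.

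The gap is in your last paragraph, where you commit to a Kannan-style enumeration tree and claim that, because orthogonal projections of ellipsoids are ellipsoids, the covering bound ``propagates'' to every projected sublattice and caps the total node count at $2^{O(n)}G(K,\mathcal{L})$. This does not follow and is in fact false as a general phenomenon: the number of points of the projected lattice $\pi_k(\mathcal{L})$ inside the projected ball at level $k$ is governed by the Gram--Schmidt profile of the basis, not by $G$ of the original body, and there is no inequality of the form $G(\pi_k(E),\pi_k(\mathcal{L}))\le 2^{O(n)}G(E,\mathcal{L})$. The Hanrot--Stehl\'e analysis shows that even for HKZ-reduced bases the intermediate levels of the enumeration tree can contain $n^{\Theta(n)}$ nodes while the ball itself contains only $2^{O(n)}$ lattice points; this is precisely why basis-reduction enumeration yields $n^{O(n)}$ rather than single-exponential time. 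The correct completion — and the one in the cited reference — abandons the tree entirely: after mapping $E$ to $B_2^n$, one precomputes the Voronoi-relevant vectors of the transformed lattice \`a la Micciancio--Voulgaris ($2^{O(n)}\poly(n,b)$ time, $2^n\poly(n,b)$ space) and then performs a graph traversal over the lattice points whose Voronoi cells meet the target ball; the number of visited points is bounded by $|(t+B_2^n+V)\cap\mathcal{L}|\le 2^{O(n)}G(B_2^n,\mathcal{L})$ via a volume (mixed-volume) argument comparing $B_2^n+2V$ to $B_2^n$ and $V$. Without replacing your tree-pruning step by this Voronoi-cell traversal (or an equivalent output-sensitive device), the claimed running time $G(K,\mathcal{L})2^{O(n)}\poly(n,b)$ is not established.
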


\begin{theorem}[\lattspars($\mathcal{L}(A), K, \delta$), 
\cite{DBLP:journals/toc/DadushK16}]
	For $\delta > 0$, a basis $A'$ for a $(K,\delta)$-sparsifier for 
$\mathcal{L}(A)$ can be computed deterministically in $2^{O(n)}\poly(n,b)$ time 
and 
$2^n\poly(n,b)$ space.
\end{theorem}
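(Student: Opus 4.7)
The plan is to construct the sparsifier $\mathcal{L}'$ as a sublattice of $\mathcal{L}$ of prime index $p$, following the probabilistic template of Dadush and Kun: choose a prime $p = \Theta((1/\delta)^n)$ and a nonzero linear functional $\phi \colon \mathcal{L}/p\mathcal{L} \to \Z/p\Z$, and set $\mathcal{L}' := \ker \phi$. Both properties of a $(K,\delta)$-sparsifier should be established by analyzing a uniformly random $\phi$, and then derandomizing.

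For property $(1)$, since $[\mathcal{L} : \mathcal{L}'] = p$, for every $x \in \Ren$ a uniform random $\phi$ yields $\mathbb{E}_\phi[|(x+K) \cap \mathcal{L}'|] \leq G(K, \mathcal{L})/p + 1$, so choosing $p$ of order $(1/\delta)^n$ would force $G(K, \mathcal{L}') \leq O(1/\delta)^n$ in expectation. For property $(2)$, the key idea is that any closest vector $v \in \mathcal{L}$ to $x$ can be perturbed by a small shift $y \in \mathcal{L}\cap\delta K$ so that $v + y \in \mathcal{L}'$; the triangle inequality then gives $d_K(\mathcal{L}', x) \leq d_K(\mathcal{L}, x) + \delta$. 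The existence of such $y$ follows from a counting/volumetric estimate on $\mathcal{L} \cap \delta K$ together with averaging over $\phi$, and a union bound over a polynomial-size net of relevant targets $x$ ensures that a single $\phi$ works simultaneously. Together these give the existence of a good sparsifier with positive probability.

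The main obstacle will be derandomizing within the stated $2^{O(n)} \poly(n,b)$ time budget. The plan is to enumerate candidate primes $p$ in the relevant range and, for each one, apply the method of conditional expectations coordinate-by-coordinate on the representation of $\phi \in (\Z/p\Z)^n$. At each conditioning step, the relevant partial expectations (counts of lattice points of $\ker \phi$ inside appropriate translates of $K$) can be evaluated by a call to \lattenum\ on the current partial sublattice. Because \lattenum\ itself runs in $2^{O(n)} \poly(n,b)$ time and $2^n \poly(n,b)$ space, and we only make $\poly(n,b)$ conditioning steps with a single enumeration call active at any time, both the time and space bounds in the statement are met.
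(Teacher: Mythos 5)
First, a point of reference: the paper does not prove this statement at all --- it is imported as a black box from Dadush and Kun \cite{DBLP:journals/toc/DadushK16}, so there is no in-paper argument to compare against and you are effectively being asked to reconstruct their (long and delicate) proof. Your sketch does capture the correct high-level template --- prime-index sublattices $\ker\phi$ chosen by a probabilistic argument and then derandomized --- but each of your three steps glosses over exactly the point where the real difficulty lies.

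Concretely: (i) a single sublattice of index $p=\Theta((1/\delta)^n)$ only reduces the local count from $G(K,\mathcal{L})$ to about $G(K,\mathcal{L})/p+G(K,p\mathcal{L})$, and $G(K,\mathcal{L})$ can be as large as $2^{\poly(n,b)}$, so one step does not reach $O(1/\delta)^n$; the actual construction iterates. Moreover both sparsifier conditions are ``for all $x\in\Ren$'' statements: bounding $\E_\phi[|(x+K)\cap\mathcal{L}'|]$ for each fixed $x$ does not bound the expectation of the maximum, and the distance condition cannot be reduced to a polynomial-size net of targets without a separate argument. (ii) The distance property is the heart of the theorem and your argument for it breaks down: there is no volumetric lower bound on $|\mathcal{L}\cap\delta K|$, which may well equal $\{0\}$ (the lattice can be arbitrarily sparse relative to $K$), in which case no perturbation $y$ exists and $v$ itself survives into $\ker\phi$ only with probability about $1/p$. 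The correct argument instead shows that whenever $x+(d_K(\mathcal{L},x)+\delta)K$ meets many distinct cosets of $p\mathcal{L}$ a random kernel retains one of them, and handles the few-coset regime separately; making this uniform in $x$ is the technical core of \cite{DBLP:journals/toc/DadushK16}. (iii) The derandomization blows the time budget: conditioning coordinate-by-coordinate on $\phi\in(\Z/p\Z)^n$ with $p=\Theta((1/\delta)^n)$ requires scanning $p$ candidate values per coordinate, which is not $2^{O(n)}\poly(n,b)$ for small $\delta$, and the quantity you would need to track (a maximum of counts over all translates) is not computable by a single \lattenum\ call. Each of these gaps requires a substantive new idea, which is precisely why the paper cites the result rather than proving it.
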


We now combine these two theorems with Lemma~\ref{lemma:observation}.
\begin{theorem}\label{thm:sparsificationmodulus}
	There is an algorithm (described in the proof) that for an origin 
symmetric convex body $K$ in $\Ren$, with modulus of smoothness bounded by 
$\rho_K \leq 
C\tau^q$ with some $C,q \geq 1$, solves $(1+\epsilon)$-\cvp[K] for any lattice 
$\mathcal{L}$ and target vector $t\in\Ren$ in time 
$O(\frac{C}{\epsilon})^{n/q}\poly(n,b)$ and 
space $2^n \poly(n,b)$.
\end{theorem}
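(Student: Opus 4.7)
The plan is to combine Lemma~\ref{lemma:observation} with the deterministic subroutines \lattspars\ and \lattenum\ in a two-phase scheme. After the normalisation described just before the statement of the theorem, we have $1\leq R^{*}:=d_K(\mathcal{L},t)\leq 2n^{5/2}2^{(n^2+n)b}$ (assuming $t\notin\mathcal{L}$). The idea is first to localise $R^{*}$ to within a constant factor, and then to perform a single sparsifier-enumerator call at the correct scale, reading off a $(1+\epsilon)$-approximate closest vector directly from Lemma~\ref{lemma:observation}.

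Phase 1 (rough localisation): I would do a binary search on $R$ using a coarse precision $\delta_0=1/4$, independent of $\epsilon$. At each step of the search, construct an $(RK,\delta_0)$-sparsifier $\mathcal{L}''$ via \lattspars, and call \lattenum\ to list the points of $\mathcal{L}''$ inside $t+\frac{5}{4}RK$. By the second property of the sparsifier, if no point is found we may conclude $R^{*}>R$, whereas if a point is found we have an $\mathcal{L}$-vector certifying $R^{*}\leq\frac{5}{4}R$. Because $G(\frac{5}{4}RK,\mathcal{L}'')\leq O(1)^{n}=2^{O(n)}$, each test runs in $2^{O(n)}\poly(n,b)$ time, and $O(\log(nb))$ tests suffice to narrow $R^{*}$ to an interval $[R,2R]$.

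Phase 2 (fine approximation): set $\delta:=(\epsilon/(2C))^{1/q}$ and build a single $(RK,\delta)$-sparsifier $\mathcal{L}'$ using \lattspars. Because $R\leq R^{*}$, the translate $t+RK$ contains no lattice vector of $\mathcal{L}$ in its interior, so Lemma~\ref{lemma:observation} applies at scale $RK$ and yields
\[
d_{RK}(\mathcal{L}',t)\leq d_{RK}(\mathcal{L},t)+2C\delta^{q}=d_{RK}(\mathcal{L},t)+\epsilon,
\]
equivalently $d_K(\mathcal{L}',t)\leq R^{*}+R\epsilon\leq (1+\epsilon)R^{*}$. Run \lattenum\ to list all points of $\mathcal{L}'$ inside $t+2R(1+\epsilon)K\supseteq t+(1+\epsilon)R^{*}K$; by the first sparsifier property, this list has at most $G(2R(1+\epsilon)K,\mathcal{L}')\leq O(1/\delta)^{n}=O(C/\epsilon)^{n/q}$ elements. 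Return the one nearest to $t$ in the $K$-norm. Since $\mathcal{L}'\subseteq\mathcal{L}$, this is a valid $(1+\epsilon)$-approximate closest vector, for a total time of $O(C/\epsilon)^{n/q}\poly(n,b)$ and space $2^{n}\poly(n,b)$ coming from the subroutines (the $2^{O(n)}$ factor is absorbed into the base of $(C/\epsilon)^{n/q}$ as $q$ is constant).

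The main obstacle is that a pure single-phase binary search with the target precision $\delta=(\epsilon/(2C))^{1/q}$ would fail whenever $\delta>\epsilon$, i.e.\ in precisely the regime where the modulus-of-smoothness improvement kicks in: the outcome "no point in $t+R(1+\epsilon)K$" would no longer rule out $R^{*}\leq R$, because the sparsifier alone can shift the nearest vector by up to $R\delta>R\epsilon$. Decoupling the coarse localisation (which relies only on the second sparsifier property) from the one-shot fine step (where Lemma~\ref{lemma:observation} actually buys a factor of $q$ in the exponent) sidesteps this difficulty cleanly.
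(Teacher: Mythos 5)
Your proposal is correct and rests on exactly the same ingredients as the paper's proof: Lemma~\ref{lemma:observation} applied to a sparsifier of precision $O((\epsilon/C)^{1/q})$ at a scale $R\leq d_K(\mathcal{L},t)$, followed by \lattenum{} on the sparsified lattice. The only difference is organizational --- the paper runs a single doubling loop over $K_d=2^dK$, invoking the fine sparsifier and enumerating $(2+\epsilon)K_d$ at every scale (so each of the $\poly(n,b)$ iterations already costs $O(C/\epsilon)^{n/q}$), whereas you first localise $d_K(\mathcal{L},t)$ to a factor $2$ with a cheap constant-precision binary search and then make one fine call; both yield the stated time and space bounds.
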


\begin{proof}[Proof of Theorem~\ref{thm:sparsificationmodulus}]
We may assume $\epsilon \leq 1$. If $t \in \mathcal{L}(A)$ (this can be 
checked in $\poly(nb)$ time), return $t$. Else, set $\bar{\epsilon} = 
\frac{\epsilon}{4C}$ and $d = 0$ and apply the following algorithm.

\smallskip
\begin{minipage}{0.96\textwidth}
\begin{enumerate}
 \item Set $K_d = 2^d K$. 
 \item Apply \lattspars$(K_d,\mathcal{L}, 
   \bar{\epsilon}^{1/q}$). Denote the sparsified lattice by $\mathcal{L'}$.
 \item Apply \lattenum$((2+\epsilon)K_d, t,\mathcal{L'},\epsilon)$. 
If there is a lattice vector in 
$t + (2+\epsilon)K$, return the closest one to 
$t$, and stop. 
Else, set $d \leftarrow d+1$ and go to $(1)$.
\end{enumerate}
\end{minipage}
\smallskip

Let $k$ be the largest positive integer such that $t + K_{k}$ does not 
contain a lattice vector. First, we claim that the 
algorithm will terminate at iteration $d\leq k$.
Indeed, since $t+2K_k=t+K_{k+1}$ contains a lattice vector of $\mathcal{L}$, by 
Lemma~\ref{lemma:observation}, $(2+\epsilon)K_k$ contains a lattice vector of 
$\mathcal{L'}$, and hence, the algorithm will terminate at $d=k$, or before.

To bound the error, we assume that the algorithm terminated at iteration $d$.
By the previous paragraph, $t + K_{d}$ does not contain a lattice vector, and 
thus,
\begin{equation}\label{eq:kdempty}
d_K(\mathcal{L}, t) \geq 2^{d}.
\end{equation}

Let $v$ denote the lattice vector returned by \lattenum$((2+\epsilon)K_d, 
t,\mathcal{L'},\epsilon)$. By Lemma~\ref{lemma:observation}, we only have an 
additive error of $2C\bar{\epsilon}=\frac{\epsilon}{2}$ with respect to 
$\norm{\cdot}_{K_d}$, that is,
\[
d_{K_d}(\mathcal{L}, t)\leq \norm{t-v}_{K_d}+\frac{\epsilon}{2},
\]
which, by \eqref{eq:kdempty} yields
\[
d_{K}(\mathcal{L}, t)
\leq
\norm{t-v}_{K}+2^d\epsilon
\leq
\norm{t-v}_{K}+\epsilon d_{K}(\mathcal{L}, t),
\]
and hence, 
$d_{K}(\mathcal{L}, t)\leq \frac{1}{1-\epsilon/2}\norm{t-v}_{K} 
\leq(1+\epsilon)\norm{t-v}_{K}$. Thus, we found a 
$(1+\epsilon)$-approximate solution.

Next, we consider the time and space requirements. It is clear that step $(2)$
always takes time $2^{O(n)}\poly(n,b)$ and space $2^n\poly(n,b)$, independently 
of $d$. 
Note that $G((2+\epsilon)K, \mathcal{L'}) \leq G(3K,\mathcal{L'})\leq 
O(\frac{1}{\bar{\epsilon}})^{n/q}$, and thus, step $(3)$ takes
$O(\frac{C}{\epsilon})^{n/q}\poly(n,b)$ time and $2^n \poly(n,b)$ space. 
Since $d_K(t,\mathcal{L})\leq 2n^{5/2}2^{(n^2 
+ n)b}$, we need at most $\log_{2}(2n^{5/2}2^{(n^2 + n)b}) = \poly(n,b)$ 
iterations, resulting in time $O(\frac{C}{\epsilon})^{n/q}\poly(n,b)$. 
This completes the proof of Theorem~\ref{thm:sparsificationmodulus}.
\end{proof}

\section{Appendix: Proof of some lemmas}\label{sec:appendix}

\begin{proof}[Proof of Lemma~\ref{lemma:symmetric}]
	We cover $K$ greedily by copies of $\frac{\epsilon}{2}K$ as follows. If 
after 
	selecting $i-1$ homothetic copies of $K$ there is a point $p_i \in K$ 
not 
	yet covered, we take $Q_i=p_i+\frac{\epsilon}{2}K$. To see that after 
$N \leq 
	(\frac{5}{\epsilon})^n$ steps, all 
	points of $K$ are covered, we notice that the sets $\frac{1}{2}\odot 
Q_i$ are 
	non-overlapping, and are contained in $(1+\epsilon/4)K \subseteq 
\frac{5}{4}K$. Taking the volume of 
	these 
	sets, we obtain the desired bound.
\end{proof}

\begin{proof}[Proof of Lemma~\ref{lem:symmetriccovering}]

Let $\{Q_i\}_{i=1}^N$ be a \twocov of $K$. For each $i\in[N]$, 
we will find a \twocov[(2,1)] for $Q_i$ using at most $10^n$ centrally 
symmetric convex bodies. Thus, the union of these at most $10^nN$ symmetric 
sets will yield a \twocov of $K$.
Fix $i\in[N]$ and set $\tilde Q_i=\frac{1}{2}\left((Q_i-c(Q_i))\cap 
(c(Q_i)-Q_i) \right)$. 
In the same fashion as in the proof of Lemma~\ref{lemma:symmetric}, let 
$\{b_1,\ldots,b_m\}$ be a maximal subset of $Q_i$ such that the interiors of 
the 
sets $b_1+\frac{1}{2}\tilde Q_i, \ldots, b_m+\frac{1}{2}\tilde Q_i$ are 
pairwise 
disjoint. Clearly, $\tilde Q_i+\{b_1,\ldots,b_m\}$ is a covering of $Q_i$.

By a result of Milman and Pajor \cite{journalMilmanP}, if the centroid of a 
convex body $Q$ 
in $\Ren$ is the origin, then 
\begin{equation}\label{eq:MiPa}
 \vol(Q\cap -Q) \geq 2^{-n}\vol(Q).
\end{equation}
Thus, $\vol(b_k + \frac{1}{2} \tilde Q_i)\geq 8^{-n}\vol(Q)$ ($k=1,\ldots,m$). 
Since $b_k\in Q_i$ and $\frac{1}{2}\tilde Q_i\subseteq \frac{1}{4}(Q_i-c(Q_i))$, 
we have that $b_k + \frac{1}{2} \tilde Q_i\subseteq\frac{5}{4}\odot Q_i$. Thus, 
$m\leq 10^n$.

To see that $\tilde Q_i+\{b_1,\ldots,b_m\}$ is \twocov[(2,1)] of $Q_i$, note 
that $2\tilde Q_i\subseteq (Q_i-c(Q_i))$, and hence $b_k+2\tilde Q_i\subseteq 
Q_i+ (Q_i-c(Q_i)) = 2\odot Q_i$, as required.
\end{proof}

\begin{proof}[Proof of Lemma~\ref{lemma:symmetrizer}]
The same argument as that used in the proof of Lemma~\ref{lemma:symmetric} 
combined with \eqref{eq:MiPa} yields it.
\end{proof}

\begin{proof}[Proof of Proposition~\ref{covering:zonotope}]
We may assume that $\epsilon = (2^k -1)^{-1}$ for some positive integer $k$.

For $i \in [k]$, the following union of translated intervals is a \twocov of 
$[-b,b]$:
\begin{alignat*}{1}
[-b,b] \subseteq \bigcup_{\delta \in \{\pm 1\}, \, j \in 
[k]}\left(\delta(1- (2^j-1)\epsilon)b + 
[-2^{j-1}\epsilon b, 2^{j-1}\epsilon b]\right)
\end{alignat*}
We may decompose analogously every line segment generating $\mathcal{Z}$ and 
combine them to give a \twocov for $\mathcal{Z}$:
\begin{alignat*}{1}
\mathcal{Z} \subseteq \bigcup_{\delta \in \{\pm 1\}^m, \, \alpha \in 
[k]^m} 
\sum_{i=0}^{k}\left(\delta_i(1-(2^{\alpha_i}-1)\epsilon)b_i + 
[-2^{\alpha_i-1}\epsilon b_i, 2^{\alpha_i-1}\epsilon b_i]\right)
\end{alignat*}
This is a \twocov for $\mathcal{Z}$ using 
$(2\log_2(1+1/\epsilon)+1)^m$ (translated) zonotopes.
\end{proof}

\begin{proof}[Proof of Proposition~\ref{covering:polytope}]
We may assume that $\epsilon = \left((4/3)^k-1\right)^{-1}$ for some positive 
integer $k$.

For $\alpha \in [k]^m$ and $\delta \in \{\pm 1\}^m$, 
consider the following polytopes:
\[
 \bar{Q}(\alpha, \delta) =
\]
\begin{alignat*}{1}
 \left\{x \st 
\left(1-\left(\left(\frac{4}{3}\right)^{\alpha_i}-1\right)\epsilon\right)b_i 
\leq \delta a_i^T x 
\leq 
\left(1-\left(\left(\frac{4}{3}\right)^{\alpha_i-1}-1\right)\epsilon\right)b_i\,
 , i \in 
[m]\right\}
\end{alignat*}
For each facet direction $|a_i^T x|\leq b_i$, scaling each of the resulting 
(non-empty) $\bar{Q}$ around any point in its interior by a factor $4$, it is 
straightforward to check that the resulting convex body is contained inside 
$\{x \in \Ren \st |a_i^T x| \leq (1+\epsilon)b_i\}$. It follows that each such 
non-empty polyhedron $\bar{Q}$ can be scaled by a factor $4$ 
around any point in it and the resulting polytope is still contained inside 
$(1+\epsilon)P$ and it is clear that $P$ is contained in the union of the 
$\bar{Q}(\alpha, \delta)$. 

We could stop here and have a \twocov for $P$, but we are not guaranteed that 
the resulting cells are centrally symmetric. In order to ensure this, we will 
symmetrize the resulting $\bar{Q}(\alpha, \delta)$ as follows. Fix 
$x(\alpha,\delta) \in \bar{Q}(\alpha, \delta)$ and define
\begin{alignat*}{1}
\bar{Q}_x(\alpha,\delta) = x(\alpha,\delta) + 
\conv(\bar{Q}(\alpha,\delta)-x(\alpha,\delta), x(\alpha,\delta) - 
\bar{Q}(\alpha,\delta))
\end{alignat*}
These are centrally symmetric polytopes with center of symmetry 
at $x(\alpha,\delta)$. When $\bar{Q}$
is scaled by a factor $4$, it is still contained in
$(1+\epsilon)P$, thus we have $2\odot Q_x(\alpha,\delta) \subseteq 
(1+\epsilon)P$. Thus, the 
union of all $\{\bar{Q}_x(\alpha,\delta)\}$ is a \twocov for $K$ 
using at most $2^m 
(\log_{4/3}(1/\epsilon)+1)^m$ symmetric convex bodies.
\end{proof}

\section*{Acknowledgements}
We thank Friedrich Eisenbrand for suggesting to use coverings to 
boost approximate CVP and for helpful remarks and ideas during the research. 
We would also like to thank Christoph Hunkenschröder and Matthias Schymura for 
helpful remarks on the text, and for our stimulating discussions that
boosted our understanding of the closest vector problem.

Part of MN's research was carried out while he was a member of J\'anos Pach's 
chair of DCG at EPFL, supported by Swiss National Science Foundation Grants 
200020-162884 and 200021-165977. 
MN was supported also by the National Research, Development and Innovation Fund (NRDI) grants K119670 and KKP-133864 as 
well as the Bolyai Scholarship of the Hungarian Academy of Sciences and the New National 
Excellence Programme and the TKP2020-NKA-06 program provided by the NRDI. 

MV was supported by the Swiss  National  Science  Foundation  within  
the project \emph{Lattice Algorithms and Integer Programming} (Nr.  
200021-185030).

\bibliographystyle{alpha}
\bibliography{references}

\end{document}